\newcounter{subcopyrightbox@save}
\newcommand{\eat}[1]{}
\newtheorem{theorem}{Theorem}
\newtheorem{definition}[theorem]{Definition}
\newtheorem{corollary}[theorem]{Corollary}
\newcommand{\myparatight}[1]{\smallskip\noindent{\bf {#1}:}~}
\begin{document}

%\conferenceinfo{KDD'13,}{Aug., 2013, Beijing, China}
%\CopyrightYear{2012} % Allows default copyright year (20XX) to be over-ridden - IF NEED BE.
%\crdata{978-1-4503-1544-9}  % Allows default copyright data (0-89791-88-6/97/05) to be over-ridden - IF NEED BE.

\title{You are Who You Know and How You Behave: Attribute Inference Attacks via Users' Social Friends and Behaviors}

%\numberofauthors{2} 
%
%\author{
%\alignauthor Neil Zhenqiang Gong\\
%       \affaddr{ECE, Iowa State University}\\
%       \email{neilgong@iastate.edu}
%\alignauthor Bin Liu \\
%       \affaddr{Business School, Rutgers University}
%       \email{BenBinLiu@gmail.com}
%}

\author{
{\rm Neil Zhenqiang Gong}\\
ECE Department, Iowa State University \\
neilgong@iastate.edu
\and
{\rm Bin Liu}\\
MSIS Department, Rutgers University\\
BenBinLiu@gmail.com
% copy the following lines to add more authors
% \and
% {\rm Name}\\
%Name Institution
} % end author

\maketitle

\begin{abstract}
We propose new privacy attacks to infer attributes (e.g., locations, occupations, and interests)  of online social network  users.  
Our attacks leverage seemingly innocent user information that is publicly available in online social networks to infer missing attributes of targeted users.
Given the increasing availability of (seemingly innocent) user information online, our results 
have serious implications for Internet privacy -- private attributes can be inferred from users'
publicly available data unless we take steps to protect users from such inference attacks.

To infer attributes of a targeted user, existing  inference attacks leverage either the user's publicly available social friends or the user's behavioral records (e.g., the webpages that the user has liked on Facebook, the apps that the user has reviewed on Google Play), but not both.  As we will show, such inference attacks achieve limited success rates. 
However, the problem
becomes \emph{qualitatively} different if we consider both social friends and behavioral records.  
To address this challenge, we develop a novel model to integrate social friends and behavioral records 
and design new attacks based on our model. We theoretically and experimentally
demonstrate the effectiveness of our attacks. For instance, 
we observe that, in a real-world large-scale dataset with 1.1 million users,
our attack can correctly infer \emph{the cities a user lived in} for 57\% of the users; 
via \emph{confidence estimation}, we are able to
increase the attack success rate to over 90\% 
if the attacker selectively attacks a half of the users. 
Moreover,  we show that our attack can 
correctly infer attributes for significantly more users than previous attacks.

\end{abstract}

%\category{J.4}{Computer Applications}{Social and behavioral sciences}
%\category{K.4.1}{Computers and Society}{Public Policy Issues}[Privacy]
%%\category{H.3.5}{Information Storage and Retrieval}{Online Information Services}[Web-based services]
%\terms{Algorithms, Security}
%%\keywords{Attribute inference, Social network privacy, Social-Attribute-Behavior network}

%\input{intro}
\section{Introduction}
%characteristics
Online social networks (e.g., Facebook, Google+, and Twitter) have become
increasingly important platforms for users to interact with each other, process
information, and diffuse social influence.  A user in an online social network 
 essentially has a list of social friends, a digital record of behaviors, and a profile.    
For instance, behavioral records could be a list of pages liked or shared by the user 
on Facebook, or they could be a set of mobile apps liked or rated by the user in Google+ or Google Play. %\footnote{Google assigns a unique ID to a user, and this ID gives us access to the user's Google+ page and the list of apps reviewed by the user in Google Play.} 
  A profile introduces the user's 
self-declared attributes such as majors, employers, and cities lived. 
To address users' privacy concerns, online social network operators provide users with 
 fine-grained privacy settings, e.g., a user could limit some attributes  to be accessible 
only to his/her friends.  Moreover, a user could also create an
account without providing any attribute information. \emph{As a result, an online social network
is a mixture of both public and private user information}.

%attribute inference
One privacy attack of increasing interest revolves around these user attributes~\cite{He06, Lindamood09, Zheleva09,AttributeInferenceFriendPET10,gong2014joint,Mislove10, AgeInferenceInfocom, Labitzke13, weinsberg2012blurme,McCallum98,Chaabane12,kosinski2013private,Humbert13,gardar,Jurgens15ICWSM}.  
In this \emph{attribute inference attack}, an attacker aims to
propagate attribute information of social network users with publicly visible attributes
to  users with missing or incomplete
attribute data. 
%This scenario often arises in practice when users in online
%social networks set their profiles to be publicly invisible or create an
%account without providing any attribute information. 
 %For those users who set their attributes to be private or do not provide any attributes, attribute inference can be viewed as privacy attacks to them. 
 Specifically, the attacker could be any party (e.g., cyber criminal, online social network provider, advertiser, data broker, and surveillance agency) who has interests in users' private attributes. To perform such privacy attacks, the attacker only needs to collect publicly available data from online social networks. Apart from privacy risks, the inferred user attributes can also be  used %(by the attacker or any party who obtains the inferred attributes from the attacker)
  to perform various security-sensitive activities such as spear phishing~\cite{spearphishing} and attacking personal information based backup authentication~\cite{Gupta13}. Moreover, an attacker can leverage the inferred attributes to link online users across multiple sites~\cite{Bartunov12,goga2013large,stylometrylink,userlinkAcrossSitesWWW13} or with offline records (e.g., publicly available voter registration records)~\cite{kanonymity,offlineattack} to  form composite user profiles,  resulting in even bigger security and privacy risks.

%associated with
%attribute inference as well as the importance of attribute information for
%security applications including targeted spamming and attacking personal information based backup authentication~\cite{Gupta13}, and correlating users
%across different online social networks~\cite{Bartunov12,goga2013large},  personalized recommendation~\cite{collaborativefiltering10, zhao2014we}, link prediction~\cite{gong2014joint,reciprocity14}, and community detection~\cite{Dang12,Zhou09}. 

%people search~\cite{people-search-01,people-search-02}, collaborative
%filtering~\cite{collaborativefiltering10} and user identity resolution~\cite{Bartunov12}.

%previous work
Existing attribute inference attacks can be roughly classified into two categories, \emph{friend-based}~\cite{He06, Lindamood09, Zheleva09, AttributeInferenceFriendPET10, gong2014joint,Mislove10, AgeInferenceInfocom, Labitzke13,Humbert13,gardar,Jurgens15ICWSM} and \emph{behavior-based}~\cite{weinsberg2012blurme,McCallum98,Chaabane12,kosinski2013private}. Friend-based attacks are based on the intuition of \emph{you are who you know}. Specifically, they
aim to infer  attributes for a user using the publicly available user attributes of the user's friends (or all other users in the social network) and the social structure among them. The foundation of friend-based attacks is \emph{homophily}, meaning that two linked users share similar attributes~\cite{homopily01}. For instance,  if more than half of friends of a user major in Computer Science at a certain university, the user 
might also major in Computer Science at the same university with a high probability. Behavior-based attacks infer attributes for a user based on the public attributes of users that are similar to him/her, 
and the similarities between users are identified by using their behavioral data.
The intuition behind behavior-based attacks is \emph{you are how you behave}. In particular,  users with the same attributes have similar interests, characteristics, and cultures so that they have similar behaviors. For instance, if a user liked apps, books, and music tracks on Google Play that are similar to those liked by users originally from China,   the user might also be from China. Likewise, previous measurement study~\cite{xu2011identifying} found that some apps are only popular in certain cities,  implying the possibility of inferring cities a user lived in using the apps the user used or liked.

However, these inference attacks consider either social friendship structures or user behaviors, but not both, and thus they achieve limited inference accuracy as we will show in our experiments. Moreover,  the problem of inferring user attributes becomes qualitatively different if we consider  both social structures and user behaviors  because features derived from them differ from each other, show different sparsity, and are at different scales. We show in our evaluation that simply concatenating features from the two sources of information regresses the overall results and reduces attack success rates.

%the algorithmic question of how to simultaneously incorporate social structures and user behaviors in a scalable way for attribute inference remains largely unanswered.  This features derived from different types of information differ from each other, show different sparsity, and are at different scale. We show in our evaluation that simply concatenating all features regresses the overall results and reduces accuracy.

\myparatight{Our work} In this work, we aim to combine social structures and user behaviors to infer user attributes. 
To this end, we first propose a \emph{social-behavior-attribute (SBA)} 
network model to gracefully integrate  social structures, user behaviors, and user attributes in a unified framework.
Specifically, we add additional nodes to a social structure, each of which represents an attribute or a behavior; a link between a user and an attribute node represents that the user has the corresponding attribute, and that a user has a behavior is encoded by a link between the user and the corresponding behavior node. 

Second, we design a \emph{vote distribution attack (VIAL)} under the SBA model to perform attribute inference. Specifically, VIAL iteratively distributes a fixed vote capacity 
 from a \emph{targeted user} whose attributes we want to infer to all other users in the SBA network. A user receives a high vote capacity if the user and the targeted user are structurally similar in the SBA network, e.g., they have similar social structures and/or have performed similar behaviors. Then, each user votes for its attributes via dividing its vote capacity to them. We predict the target user to own attributes that receive the highest votes. 
 
Third, we  evaluate VIAL both theoretically and empirically; and we extensively compare VIAL with several previous attacks for inferring majors, employers, and locations using  a  large-scale dataset with 1.1 million users collected from Google+ and Google Play.  For instance, we observe that our attack can correctly infer the cities a user lived in for 57\% of the users; via \emph{confidence estimation}, we are able to increase the success rate to over 90\% if the attacker selectively attacks a half of the users. Moreover, we find that our attack VIAL substantially outperforms previous attacks. Specifically, for Precision, VIAL improves  upon friend-based attacks and behavior-based attacks by over 20\% and around 100\%, respectively. These results imply that an attacker can use our  attack to successfully infer private attributes of substantially more users than previous attacks. 

In summary, our key contributions are as follows:
\begin{itemize}
\item We propose the \emph{social-behavior-attribute (SBA)} network model to integrate social structures, user behaviors, and user attributes.

\item  We design the \emph{vote  distribution attack (VIAL)} under the SBA model to perform attribute inference. 

\item We demonstrate the effectiveness of VIAL both theoretically and empirically. Moreover, we observe that VIAL  correctly infers attributes for substantially more users than previous attacks via evaluations on a  large-scale dataset collected from Google+ and Google Play.
\end{itemize}

\section{Problem Definition and Threat Model}
%We formally define the attribute inference attack and the threat model we consider.
 
\myparatight{Attackers} The attacker could be any party who has interests in user attributes. 
For instance, the attacker could be a cyber criminal, online social network provider, advertiser, data broker, or surveillance agency. 
Cyber criminals  can leverage user attributes to perform targeted social engineering attacks (now often referred to as spear phishing attacks~\cite{spearphishing}) and attacking personal information based backup authentication~\cite{Gupta13};
online social network providers and advertisers could use the user attributes for targeted advertisements; 
 data brokers make profit via selling the user attribute information to other parties such as advertisers, banking companies,
 and insurance industries~\cite{databroker}; and surveillance agency can use the attributes to identify users and monitor their activities.

\myparatight{Collecting publicly available social structures and behaviors} To perform attribute inference attacks, an attacker first needs to collect 
publicly available information. In particular, in our attacks, an attacker needs to collect social structures, user profiles, and user behaviors from online social networks. 
Such information  can be collected via writing web crawlers or leveraging APIs developed by the service providers. 
Next, we formally describe these publicly available information.

We use an undirected\footnote{Our attacks
can also be generalized to directed graphs.} graph $G_s=(V_s, E_s)$ to represent a
social structure,  where edges in $E_s$ represent social relationships  between the
 nodes in $V_s$. We denote by $\Gamma_{u,S}=\{v|(u,v)\in E_s\}$ as the 
set of social neighbors of $u$.
In addition to social network structure, we have behaviors and categorical
attributes for nodes. For instance, in our Google+ and Google Play dataset,
 nodes are
Google+ users, and edges represent friendship  between users; 
behaviors include the set of items 
(e.g., apps, books, and movies) that users rated or liked on Google Play; 
and node attributes are derived from user profile information and include
fields such as major, employer, and cities lived.  
%In this work we restrict our
%focus to categorical attributes.

We use binary representation for user behaviors. Specifically,
 we treat various objects 
(e.g., the Android app ``Angry Birds", the movie ``The Lord of the Rings", and the webpage ``facebook.com") as binary variables,
 and we denote by $m_b$ the total number of objects.
 Behaviors of a
node $u$ are then represented as a $m_b$-dimensional binary column vector
$\vec{b}_u$ with the $i${th} entry equal to $1$ when $u$ has performed a certain 
action on the $i${th} object (\emph{positive behavior}) and $-1$ when $u$ does not 
perform the action on it (\emph{negative behavior}). 
For instance, when we consider user review behaviors for Google+ users,  
objects could be items such as apps, books, and movies available in Google Play, 
and the action is \emph{review}; 
1 represents that the user reviewed the corresponding item 
and -1 means the opposite. 
For Facebook users,  objects could be webpages; 
1 represents that the user liked or shared the 
corresponding webpage and -1 means that the user did not.   
We denote by ${\bf B}=[\vec{b}_1\ \vec{b}_2 \cdot
\cdot \cdot \vec{b}_{n_s}]$ the behavior matrix for all nodes. 
%These behaviors 
%represent users' interests and could be correlated to users' attributes. 

%\myparatight{Binary representations of user attributes} 
We distinguish between \emph{attributes} and \emph{attribute values}. 
For instance, major, employer, and location are different attributes; 
and each attribute could have multiple attribute values, e.g.,
major could be Computer Science, Biology, or Physics. 
A user might own a few attribute values for a single attribute. 
For example, a user that studies  Physics for undergraduate education but 
chooses to pursue a Ph.D. degree in Computer Science has two values 
for the attribute major.  
%For example,
%various employers (e.g., Google, Intel, and Yahoo) and various majors (e.g., 
%Computer Science, Biology, and Psychology) are each treated as separate binary attributes.
Again,  we use a binary representation for each attribute value, 
and we denote the number of distinct attribute values as $m_a$.  
Then attribute information of a
node $u$ is represented as a $m_a$-dimensional binary column vector
$\vec{a}_u$ with the $i${th} entry equal to $1$ when $u$ has the $i${th}
attribute value (\emph{positive attribute}) and $-1$ when $u$ does not have it
(\emph{negative attribute}).
% and $0$ when it is unknown whether or not $u$ has
%it (\emph{missing attribute}). 
We denote by ${\bf A}=[\vec{a}_1\ \vec{a}_2 \cdot
\cdot \cdot \vec{a}_{n_s}]$ the attribute matrix for all nodes. 
%Note that
%certain attributes (e.g. Female and Male, age of 20 and 30) are mutually exclusive. Let $L$ be the
%set of all pairs of mutually exclusive attributes.  This set constrains the 
%attribute matrix $A$ so that no column contains a $1$ for two mutually exclusive attributes.  

\myparatight{Attribute inference attacks} Roughly speaking, an attribute inference attack is to infer the attributes of a set of targeted users using the collected publicly available information. Formally, we define an attribute inference attack as follows:

\begin{definition}[Attribute Inference Attack]
Suppose we are given $T=(G_s,{\bf A},{\bf B})$, which is a snapshot of a social network $G_s$ with a behavior matrix ${\bf B}$ and an attribute matrix ${\bf A}$, and a list of targeted users $V_{t}$ with social friends $\Gamma_{v,S}$ and binary behavior vectors $\vec{b}_v$ for all $v\in V_{t}$, the attribute inference attack is to infer the attribute vectors $\vec{a}_v$ for all $v\in V_{t}$.
\end{definition}

We note that a user setting the friend list to be private could also be vulnerable to inference attacks. 
This is because the user's friends could set their friend lists publicly available. 
The attacker can  collect a social relationship between two users  if 
at least one of them sets the friend list to be public. 
Moreover, we assume the users and the service providers are not taking other steps  
(e.g., obfuscating social friends~\cite{friendbasedDefense} or behaviors~\cite{weinsberg2012blurme,inferenceDefensePETS14}) 
to defend against inference attacks. 

\myparatight{Applying inferred attributes to link users across multiple online social networks and with offline records} 
We stress that an attacker could leverage our attribute inference attacks to further perform other attacks. 
For instance, a user might provide different attributes on different online social networks. 
Thus, an attacker could combine user attributes across multiple online social networks to better profile users,
and an attacker could leverage the inferred user attributes to do so~\cite{Bartunov12,goga2013large,stylometrylink,userlinkAcrossSitesWWW13}. 
Moreover, an attacker can further use the inferred user attributes to link online users with offline records (e.g., voter registration records)~\cite{kanonymity,offlineattack}, 
which results in even bigger security and privacy risks, e.g., more sophisticated social engineering attacks. 
We note that even if the inferred user attributes (e.g., major, employer) seem not private for some targeted users, 
an attacker could use them to link users across multiple online sites and with offline records.

\section{Social-Behavior-Attribute Framework}
We describe our \emph{social-behavior-attribute (SBA)} network model, which integrates social structures, user behaviors, and user attributes in a unified  framework. To perform our inference attacks,  an attacker needs to construct a SBA network from his/her collected publicly available social structures, user attributes, and behaviors.
%Then we describe our SBA-based \emph{Similarity Distribution Algorithm (VIAL)} to infer attributes for users.  

%\subsection{Design Goal}
%Capture social similarity and interest similarity.

%\subsection{Social-Behavior-Attribute (SBA) Network} 
Given a social network $G_s=(V_s,E_s)$ with $m_b$ behavior objects, 
 a behavior matrix ${\bf B}$, $m_a$ distinct attribute values, and an
attribute matrix ${\bf A}$,  
 we create an augmented
network by adding  $m_b$ additional nodes to $G_s$, with each node corresponding
 to a behavior object, and another $m_a$ additional nodes to $G_s$, 
with each additional node
corresponding to an attribute value. 
  For each node $u$ in $G_s$ with positive attribute $a$ or positive behavior $b$, 
we create an undirected link between $u$ and the additional 
node corresponding to $a$ or $b$ in the
augmented network. Moreover, we add the targeted users into the augmented 
network by connecting them to their friends and the additional nodes corresponding
 to their positive behaviors.  
We call this augmented network \emph{social-behavior-attribute (SBA)} network 
since it integrates the interactions among social
 structures, user behaviors, and user attributes.

Nodes in the SBA framework corresponding to nodes in $G_s$ or targeted users in $V_{t}$ 
are called \emph{social
nodes},  nodes representing behavior objects are called  \emph{behavior nodes},
and nodes representing attribute values are called \emph{attribute nodes}.
Moreover, we  use $S$,  $B$, and $A$ 
to represent the three types of nodes, respectively.  
Links between social nodes are called \emph{social links}, 
 links between
social nodes and behavior nodes are called \emph{behavior links},
and links between
social nodes and attribute nodes are called \emph{attribute links}. 
Note that there are no links between  behavior nodes and attribute nodes. 
Fig.~\ref{sab} illustrates an example SBA network, in which the two social nodes $u_5$ and $u_6$ correspond to two targeted users. The behavior nodes in this example correspond to Android apps, and 
a behavior link  represents that the corresponding user used the 
corresponding app. 
 Intuitively,
the SBA framework explicitly describes the sharing of  behaviors and attributes across social
nodes. 
%Moreover, under the SBA framework,  attribute
%inference  involves predicting attribute links.

\begin{figure}[!t]
%\vspace{-4mm}
\centering
\includegraphics[width=0.4 \textwidth]{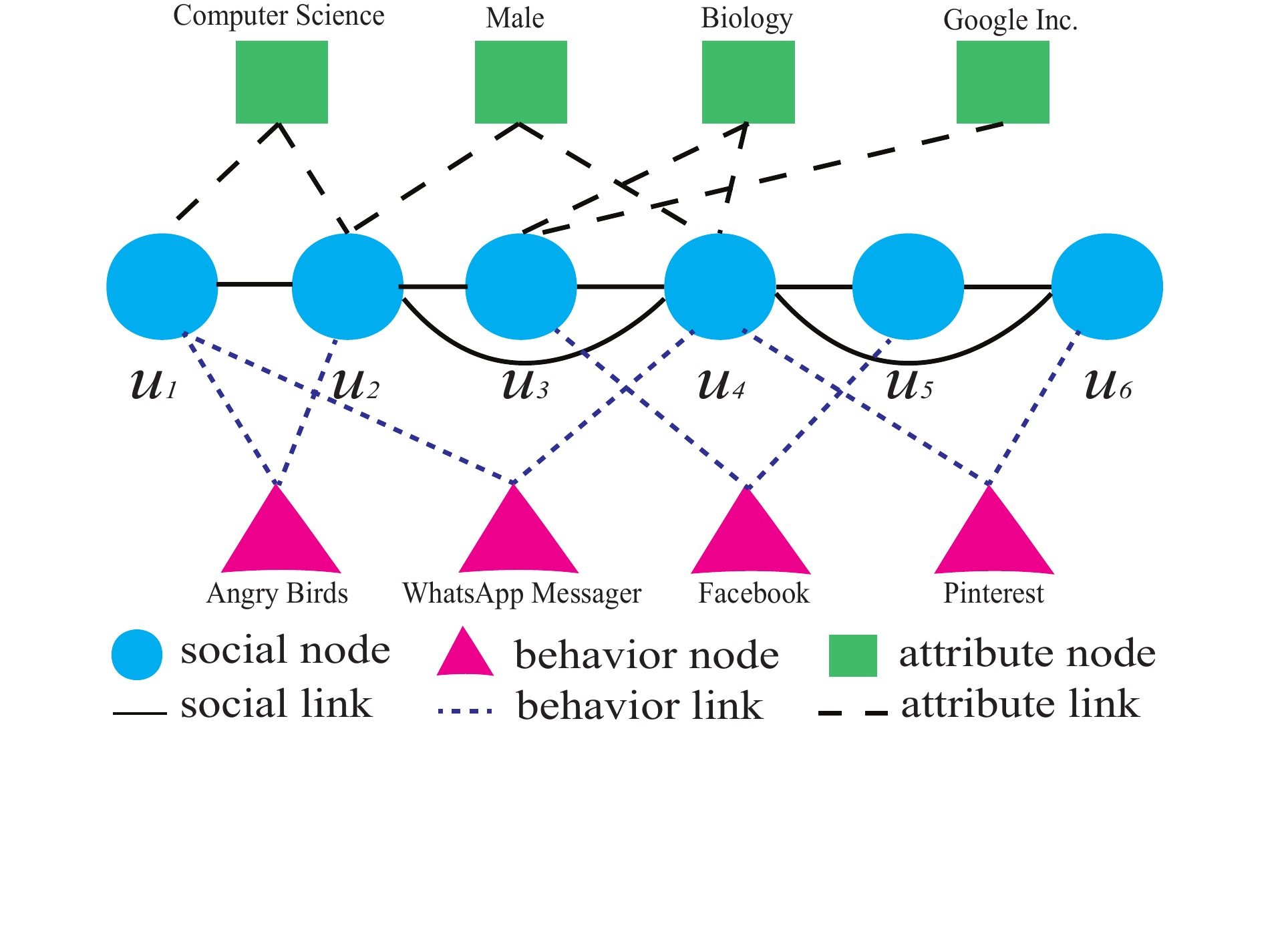} 
\vspace{-10mm}
\caption{Social-behavior-attribute network.}
%\vspace{-4mm}
\label{sab}
\end{figure}

We also place weights on various links in the SBA framework.
These link weights 
%describe the relative importance of individual
%relationships across nodes and can also be used in a global fashion to 
balance the influence of social links versus behavior links versus
attribute links.\footnote{In principle, 
we could also assign weights to nodes to incorporate their relative importance. 
However, our attack does not rely on node weights, so we do not discuss them.}  
For instance, weights on social links could represent the tie strengths between social nodes. 
Users with stronger tie strengths could be more likely to share the same attribute values. 
The weight on a behavior link could indicate the predictiveness of the behavior in terms of the user's attributes. In other words, a behavior link with a higher weight means that performing the corresponding behavior better predicts the attributes of the user. For instance, if we want to predict user gender, the weight of the link between a female user and  a mobile app
 tracking women's monthly periods could be larger than the weight of the link between a male user and the app.
 % because  the app tracking women's monthly periods is more predictive about  the gender of the users who use it.
 Weights on attribute links can represent the degree of affinity between  users and  attribute values. For instance, an attribute link connecting the user's hometown could have a higher weight than the attribute link connecting a city where the user once travelled. 
%Although our attack is applicable for any link weights and obtaining link weights is not the focus of this paper,
%In this work, we assume these link weights are given and we design attacks that leverage them,
 We discuss how link weights can be learnt via machine learning in Section~\ref{app:weight}.
%Weights on behavior links could be the rating scores that users gave
%to various objects or interest preferences analyzed from the review comments if we consider review behaviors.

%The weight on a behavior link could be the degree of interest-functionality match between the 
%corresponding user and the corresponding item  when we consider review behaviors, 
%and they can be learned by collaborative filtering~\cite{collaborativefiltering10}. 
% Tie strengths could be 
%reflected from the interactions (e.g., exchanging private messages, commenting each other's posts)
% between two nodes, and we can learn a function (e.g., the logistic function) to map a given set of 
% interactions into the real-valued edge weights.

We denote a SBA network as $G=(V,E,w,t)$, where $V$ is the set of nodes, 
$n=|V|$ is the total number of nodes,  
$E$ is the set of links, $m=|E|$ is the total number of links, 
$w$ is a function that maps a link to its link weight, 
i.e., $w_{uv}$ is the weight of link $(u,v)$, and $t$ a function that maps
a node to its node type, i.e., $t_u$ is the node type of $u$.
For instance, $t_u= S$ means that $u$ is a social node.  
Additionally, for a given node $u$ in the SBA network, we denote by $\Gamma_{u}$, $\Gamma_{u,S}$, 
$\Gamma_{u,B}$, and $\Gamma_{u,A}$ respectively  
the sets of \emph{all neighbors}, \emph{social neighbors}, 
\emph{behavior neighbors}, and \emph{attribute neighbors} of $u$. 
Moreover, for links that are incident from $u$, we use $d_u$,   $d_{u,S}$,   $d_{u,B}$, and $d_{u,A}$ 
to denote the sum of weights
of all links, weights of links connecting social neighbors,
  weights of links connecting behavior neighbors, 
and weights of links connecting attribute neighbors, respectively.  More specifically, 
we have $d_u=\sum_{v\in \Gamma_u} w_{uv}$ and $d_{u,Y}=\sum_{v\in \Gamma_{u,Y}} w_{uv}$, where $Y=S, B, A$. 

Furthermore, we define two types of \emph{hop-2 social neighbors} of a social node $u$, which share common behavior neighbors or  attribute neighbors with $u$. In particular, a social node $v$ is called a \emph{behavior-sharing} social neighbor of $u$ if $v$ and $u$ share at least one common behavior neighbor. For instance, in Fig.~\ref{sab}, both $u_2$ and $u_4$ are behavior-sharing social neighbors of $u_1$. 
We denote the set of behavior-sharing social neighbors of $u$ as $\Gamma_{u,BS}$. Similarly, we denote the set of attribute-sharing social neighbors of $u$ as $\Gamma_{u,AS}$. Formally, we have $\Gamma_{u,BS}$=$\{v| t(v) =S\ \&\ \Gamma_{v,B} \cap \Gamma_{u,B} \neq \emptyset\}$ and $\Gamma_{u,AS}$=$\{v| t(v) =S\ \&\ \Gamma_{v,A} \cap \Gamma_{u,A} \neq \emptyset\}$. We note that our definitions of $\Gamma_{u,BS}$ and $\Gamma_{u,AS}$ also include the social node $u$ itself. 
These notations will be useful in describing our  attack.

\section{Vote Distribution Attack (VIAL)}
\subsection{Overview} 
Suppose we are given a SBA network $G$ which also 
includes the social structures and 
behaviors of the targeted users, our goal is to infer 
attributes for every targeted user. 
%Under the SBA framework, 
%we transform  attribute inference  to 
%attribute link prediction. 
Specifically, for each targeted user $v$, 
we compute the similarity between $v$ and each attribute value,
and then we predict that $v$ owns the attribute values that have the highest
similarity scores. In a high-level abstraction, VIAL works in two phases. 
\begin{itemize}
\item {\bf Phase I.} VIAL iteratively distributes a fixed vote capacity from the targeted user $v$ to the rest of  users in Phase I. The intuitions are that a user receives a high vote capacity if the user and the targeted user are structurally  similar in the SBA network (e.g., share common friends and behaviors), and that the targeted user is more likely to have the attribute values belonging to users with higher vote capacities. 
After Phase I, we obtain a vote capacity vector $\vec{s}_v$, where  $\vec{s}_{vu}$ is the vote capacity of user $u$. 
 \item {\bf Phase II.} Intuitively, if a user with a certain vote capacity has more attribute values, then, according to the information of this user alone,  the likelihood of each of these attribute values belonging to the targeted user decreases. Moreover, an attribute value should receive more votes if more users with higher vote capacities have the attribute value. Therefore, in Phase II, each social node votes for its attribute values via dividing its vote capacity among them, and each attribute value sums the vote capacities that are divided  to it by its social neighbors. We treat the summed vote capacity of an attribute value as its similarity with $v$. Finally, we predict  $v$ has the attribute values that receive the highest votes. 
\end{itemize}

%Moreover, we will design a novel method to estimate the confidence of our attack VIAL for a given targeted user. 

\subsection{Phase I}
In Phase I, VIAL iteratively distributes a fixed vote capacity from the targeted user $v$ to the rest of  users. 
We denote by $\vec{s}^{(i)}_v$ the vote capacity vector in the $i$th iteration, 
where  $\vec{s}^{(i)}_{vu}$ is the vote capacity of node $u$ in the $i$th iteration. Initially,  $v$ has a vote capacity $|V_s|$ 
 and all other social nodes have vote capacities of 0. Formally, we have:
\begin{align}
\vec{s}^{(0)}_{vu}=
\begin{cases} 
|V_s| & \text{ if } u=v  \\ 
0 &\text{ otherwise}
\end{cases}
\label{initial}
\end{align}

%Our algorithm VIAL iteratively distributes similarity score from $v$
%to all other nodes in $G$.  VIAL is based on the principle of \emph{homophily}~\cite{homopily01}, i.e., two linked nodes are similar and thus they have close similarity scores. 

\begin{figure}[!t]
%\vspace{-4mm}
\centering
\includegraphics[width=0.45 \textwidth]{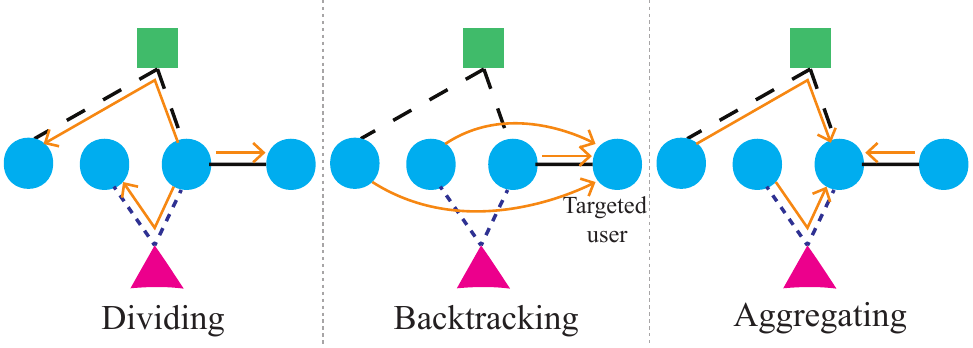} 
%\vspace{-1mm}
\caption{Illustration of our three local rules.}
%\vspace{-4mm}
\label{heu}
\end{figure}

In each iteration, VIAL applies three local rules. They are \emph{dividing}, \emph{backtracking}, and \emph{aggregating}.
Intuitively, if a user $u$ has more (hop-2) social neighbors, then each neighbor  could receive less vote capacity from $u$. 
%x$ of $u$ is more similar to $u$ if  $x$ shares more behaviors and attribute neighbors with $u$, and thus $x$ receives more votes from $u$. Therefore, we design the dividing rule, which
Therefore, our dividing rule splits a social node's vote capacity to its 
social neighbors and hop-2 social neighbors. 
%This rule captures the intuition that a (hop-2) social neighbor $w$ of $u$ is more similar to $u$ if  $w$ shares more behaviors and attribute neighbors with $u$, and thus $w$ receives more votes from $u$.
The backtracking  rule takes a portion 
of every social node's vote capacity and assigns them back to the targeted user $v$, 
which is based on the intuition that social nodes that are closer to $v$ in the SBA network 
are likely to be more similar to $v$ and should get more vote capacities.  
A user could have a higher vote capacity if it is linked to more social neighbors and  hop-2 social neighbors with higher vote capacities. Thus, 
for each user $u$,  the aggregating rule collects
 the vote capacities that are shared to $u$ by its social neighbors and hop-2 social neighbors.
Fig.~\ref{heu} illustrates the three local rules.  
Next, we elaborate the three local rules.

%\subsection{Three Local Rules}
\myparatight{Dividing}
A social node $u$ could have social neighbors, behavior-sharing social neighbors, and attribute-sharing social neighbors. 
To distinguish them, 
we use three weights $w_S$, $w_{BS}$, and $w_{AS}$ to 
represent the shares of them,  respectively. 
For instance, 
the total vote capacity shared to social neighbors of $u$  in the $t$th iteration
 is $\vec{s}^{(i-1)}_{vu}\times \frac{w_{S}}{w_S+w_{BS}+w_{AS}}$.  
Then we further divide the vote capacity among each type 
of neighbors according to their link weights. We define $I_{u,Y}=1$ if the set of neighbors $\Gamma_{u,Y}$ is non-empty, otherwise $I_{u,Y}=0$, where $Y=S,BS,AS$. 
 The variables $I_{u,S}$, $I_{u,BS}$, and $I_{u,AS}$ 
are used to consider the scenarios where $u$ does not have some type(s) of neighbors, 
in which $u$'s vote capacity is divided among less than three types of social neighbors. For convenience, we denote $w_T=w_SI_{u,S}+w_{BS}I_{u,BS}+w_{AS}I_{u,AS}$.
\begin{itemize}
\item {\bf Social neighbors.}  A social neighbor $x\in \Gamma_{u,S}$ receives a higher vote capacity from $u$ if their link weight (e.g., tie strength) is higher.  
Therefore, we model the vote capacity $p^{(i)}_v(u,x)$ that is divided to $x$ by $u$ in the $i$th iteration as: 
\begin{align}
p^{(i)}_v(u,x)=\vec{s}^{(i-1)}_{vu}\cdot \frac{w_S}{w_T}\cdot \frac{w_{ux}}{d_{u,S}},
\end{align}
where $d_{u,S}$ is the summation of weights of social links that are incident from $u$. 

\item {\bf Behavior-sharing social neighbors.}  A behavior-sharing social neighbor $x\in \Gamma_{u,BS}$ receives a higher vote capacity from $u$ if they share more behavior neighbors with higher predictiveness. Thus, we model vote capacity $q^{(i)}_v(u,x)$ that is divided to $x$ by $u$ in the $i$th iteration as: 
\begin{align}
q^{(i)}_v(u,x)=\vec{s}^{(i-1)}_{vu} \cdot \frac{w_{BS}}{w_T} \cdot w_B(u,x),
\end{align}
where $w_B(u,x)=\sum_{y\in \Gamma_{u,B} \cap\Gamma_{x,B}} \frac{w_{uy}}{d_{u,B}} \cdot \frac{w_{xy}}{d_{y,S}}$, representing the overall share of vote capacity that $u$ divides to $x$ because of their common behavior neighbors. Specifically, $\frac{w_{uy}}{d_{u,B}}$ characterizes the fraction of vote capacity $u$ divides to the behavior neighbor $y$ and $\frac{w_{xy}}{d_{y,S}}$ characterizes the fraction of vote capacity $y$ divides to $x$. Large weights of $w_{uy}$ and $w_{xy}$  indicate $y$ is a predictive behavior of the attribute values of $u$ and $x$, and having more such common behavior neighbors make $x$ share more vote capacity from $u$.

\item {\bf Attribute-sharing social neighbors.}  An attribute-sharing social neighbor $x\in \Gamma_{u,AS}$ receives a higher vote capacity from $u$ if they share more attribute neighbors with higher degree of affinity. Thus, we model vote capacity $r^{(i)}_v(u,x)$ that is divided to $x$ by $u$ in the $i$th iteration as: 
\begin{align}
r^{(i)}_v(u,x)=\vec{s}^{(i-1)}_{vu} \cdot \frac{w_{AS}}{w_T} \cdot w_A(u,x),
\end{align}
where $w_A(u,x)=\sum_{y\in \Gamma_{u,A} \cap\Gamma_{x,A}} \frac{w_{uy}}{d_{u,A}} \cdot \frac{w_{xy}}{d_{y,S}}$, representing the overall share of vote capacity that $u$ divides to $x$ because of their common attribute neighbors. Specifically, $ \frac{w_{uy}}{d_{u,A}}$ characterizes the fraction of vote capacity $u$ divides to the attribute neighbor $y$ and $\frac{w_{xy}}{d_{y,S}}$ characterizes the fraction of vote capacity $y$ divides to $x$. Large weights of $w_{uy}$ and $w_{xy}$  indicate $y$ is an attribute value with a high degree of affinity, and having more such common attribute values make $x$ share more vote capacity from $u$.
\end{itemize}

We note that a social node $x$ could be multiple types of social neighbors of $u$ (e.g., $x$ could be social neighbor and behavior-sharing social neighbor of $u$), in which $x$ receives multiple shares of vote capacity from $u$ and we sum them as $x$'s final share of vote capacity.

    \begin{figure*}
\begin{align}
&\text{\bf Our aggregating rule to compute the new vote capacity } \vec{s}^{(i)}_{vu}\  \text{\bf for $u$:}  \nonumber \\
&\vec{s}^{(i)}_{vu}=
\begin{cases} 
(1-\alpha)(\sum_{x\in \Gamma_{u,S}} p^{(i)}_v(x,u) + \sum_{x\in \Gamma_{u,BS}} q^{(i)}_v(x,u) + \sum_{x\in \Gamma_{u,AS}} r^{(i)}_v(x,u)) & \text{ if } u\neq v  \\ 
(1-\alpha)(\sum_{x\in \Gamma_{u,S}} p^{(i)}_v(x,u) + \sum_{x\in \Gamma_{u,BS}} q^{(i)}_v(x,u) + \sum_{x\in \Gamma_{u,AS}} r^{(i)}_v(x,u)) + \alpha |V_s| &\text{ otherwise}
\end{cases}
\label{aggre}
\end{align}
%\caption{Our aggregating rule to compute the new vote capacity  $\vec{s}^{(i)}_{vu}$ for $u$.}
\vspace{-4mm}
 \end{figure*}
 \begin{figure*}
\begin{align}
&\text{\bf Our dividing matrix:}  \nonumber \\
&{\bf M}_{ux}=
\begin{cases} 
\delta_{ux,S}\cdot \frac{w_{S}}{w_T}\cdot \frac{w_{ux}}{d_{u,S}} +\delta_{ux,BS} \cdot \frac{w_{BS}}{w_T} \cdot w_B(u,x)  + \delta_{ux,AS} \cdot \frac{w_{AS}}{w_T} \cdot w_A(u,x)  & \text{ if } x\in \Gamma_{u,S} \cup \Gamma_{u,BS} \cup \Gamma_{u,AS}    \\ 
0 &\text{ otherwise,}
\end{cases}
\label{matrix}
\end{align}
\hspace{21mm} where $\delta_{ux,Y}=1$ if  $x \in \Gamma_{u,Y}$, {otherwise } $\delta_{ux,Y}=0$, $Y=S,BS,AS$.
%\caption{Our dividing matrix.}
\end{figure*}

\myparatight{Backtracking} For each social node $u$, 
the backtracking rule takes a portion $\alpha$ 
of $u$'s vote capacity back to the targeted user $v$. 
Specifically, the vote capacity backtracked to $v$ from $u$ 
is $\alpha \vec{s}^{(i-1)}_{vu}$. Considering backtracking, 
 the  vote capacity divided to the social neighbor $x$ of $u$
 in the dividing step is modified as $(1-\alpha)p^{(i)}_v(u,x)$. Similarly, 
 the   vote capacities divided to a behavior-sharing social neighbor and an 
 attribute-sharing social neighbor $x$ are modified as $(1-\alpha)q^{(i)}_v(u,x)$ 
 and $(1-\alpha)r^{(i)}_v(u,x)$, respectively. 
We call the parameter $\alpha$ \emph{backtracking strength}. 
A larger  backtracking strength enforces more vote capacity to be distributed 
among the social nodes that are closer to $v$ in the SBA network.  
$\alpha=0$ means no backtracking. We will show that, via both theoretical and empirical evaluations, 
VIAL achieves better accuracy with backtracking. 

We can verify that $\sum_{x\in \Gamma_{u,S}} \frac{w_{ux}}{d_{u,S}}= 1$, $\sum_{x\in \Gamma_{u,BS}} w_B(u,x) = 1$, and $\sum_{x\in \Gamma_{u,AS}} w_A(u,x) = 1$ for every user $u$ in the dividing step. In other words, every user divides all its vote capacity to its neighbors  (including the user itself if the user has hop-2 social neighbors). Therefore, the total vote capacity keeps unchanged in every iteration, and the vote capacity that is backtracked to the targeted user is $\alpha |V_s|$.

%\vspace{-10mm}
\myparatight{Aggregating}
%The aggregating heuristic is based on the intuition that  
%a node $u$ has a high similarity with the test user $v$ if $u$'s neighbors are similar to $v$.
%In particular,  
The aggregating rule computes 
a new vote capacity for $u$ 
by aggregating the vote capacities that are divided to $u$ by its neighbors in the $i$th iteration. 
For the targeted user $v$, 
we also collect the vote capacities that are backtracked from all social nodes.
%Since the vote capacities of all social nodes sum to $|V_s|$, the sum of all backtracked vote capacities
 %is $\alpha |V_s|$.
Formally, our aggregating rule is represented as Equation~\ref{aggre}.

%\begin{algorithm}[!ht] 
%\DontPrintSemicolon 
%\KwIn{$G=(V,E,w,t)$, $w_S$, $w_{BS}$, $w_{AS}$.} 
%\KwOut{${\bf M} $.} 
%\Begin{
%//Initializing ${\bf M} $.\;
%${\bf M} \longleftarrow {\bf 0}$ \;	
%\For{$u\in V_s$}{
%//social neighbors. \;
%\For{$x\in \Gamma_{u,S}$}{
%${\bf M}_{ux} \longleftarrow {\bf M}_{ux} + \frac{w_S}{w_T} \cdot \frac{w_{ux}}{d_{u,S}} $ \;	
%}
%
%//behavior-sharing social neighbors. \;
%\For{$x\in \Gamma_{u,BS}$}{
%${\bf M}_{ux} \longleftarrow {\bf M}_{ux} + \frac{w_{BS}}{w_T} \cdot w_B(u,x) $ \;	
%}
%
%//attribute-sharing social neighbors. \;
%\For{$x\in \Gamma_{u,AS}$}{
%${\bf M}_{ux} \longleftarrow {\bf M}_{ux} + \frac{w_{AS}}{w_T} \cdot w_A(u,x) $ \;	
%}
%}
%\Return ${\bf M}$\;
%} 
%\caption{Computing the Dividing Matrix} 
%\label{alg:matrix}
%\end{algorithm}

\begin{algorithm}[!t] 
\DontPrintSemicolon 
\KwIn{$G=(V,E,w,t)$, ${\bf M} $, $v$, $\epsilon$, and $\alpha$.} 
\KwOut{$\vec{s}_v$.} 
\Begin{
//Initializing the vote capacity vector.\;
\For{$u\in V_s$}{
\eIf{$u=v$}{
	$\vec{s}^{(0)}_{vu}\longleftarrow |V_s|$ \;	
}{
	$\vec{s}^{(0)}_{vu}\longleftarrow 0$ \;	
}
}
$error \longleftarrow 1$\;
\While{$error > \epsilon$}{
$\vec{s}^{(i)}_{v}\longleftarrow \alpha \vec{e}_v  + (1-\alpha) {\bf M}^T \vec{s}^{(i-1)}_{v}$ \;	
$error \longleftarrow |\vec{s}^{(i)}_{v} - \vec{s}^{(i-1)}_{v}|/|V_s|$ \;	
}
%$\vec{s}_{v}\longleftarrow \vec{s}^{(i)}_{v}$ \;	

\Return $\vec{s}^{(i)}_{v}$\;
} 
\caption{Phase I of VIAL} 
\label{alg:VIAL}
\end{algorithm}

\myparatight{Matrix representation}
We derive the Phase I of our attack using matrix terminologies,
which makes it easier to iteratively compute the vote capacities. 
Towards this end, we define a
\emph{dividing matrix} ${\bf M}\in R^{|V_s|\times |V_s|}$, which is formally represented in Equation~\ref{matrix}. 
The dividing matrix  encodes the dividing rule. 
Specifically, $u$ divides ${\bf M}_{ux}$ fraction of
 its vote capacity to the neighbor
$x$ in the  dividing step. Note that  ${\bf M}$ includes the dividing rule for all three types of social neighbors. 
%When $u$ is a behavior node or an attribute node that only has social neighbors, 
%we have $I_{u,A}=I_{u,B}=0$ and ${\bf M}_{ux}$ reduces to $\frac{w_{ux}}{d_{u,t_x}}$ for $x\in \Gamma_u$.
With the dividing matrix ${\bf M}$, we can represent the backtracking and aggregating rules in the $i$th iteration
as follows:
\begin{align}
\vec{s}^{(i)}_v=\alpha \vec{e}_v + (1-\alpha) {\bf M}^T \vec{s}^{(i-1)}_v, 
\label{iter}
\end{align}
where $\vec{e}_v$ is a vector with the $v$th entry equals $|V_s|$ and all other
entries equal 0, and ${\bf M}^T$ is the transpose of ${\bf M}$. 

Given an initial vote capacity  vector specified in Equation~\ref{initial}, we iteratively apply
 Equation~\ref{iter} until the difference between the vectors in two consecutive iterations 
is smaller than a predefined threshold. Algorithm~\ref{alg:VIAL} shows Phase I of our attack. 
%The resulting vote capacity vector is used to compute votes for attribute values in Phase II. 
%predict attribute values for the test user $v$.
%Algorithm~\ref{alg:VIAL} in Appendix shows Phase I of our attack. 
%We note that  VIAL can also be viewed as a random walk with restart~\cite{Tong06} on the SBA network, where ${\bf M}$ is the transition matrix of the random walk and $\alpha$ is the restart probability.

\subsection{Phase II}
In Phase I, we obtained a vote capacity for each user. On one hand, the targeted user could be  more likely to share attribute values with the users with higher vote capacities. 
On the other hand, if a user has more attribute values, then the likelihood of each of these attribute values belonging to the targeted user could be smaller. For instance, if a user with a high vote capacity once studied in more universities for undergraduate education, then according to this user's information alone,  the likelihood of the targeted user studying in each of those universities could be smaller. 

Moreover, among a user's attribute values, 
%the attribute value of a user with a high vote capacity could be more likely to be an attribute value of the test user, compared to other  attribute values of the user, if 
an attribute value that has a higher degree of affinity (represented by the weight of the corresponding attribute link) with the user could be more likely to be an attribute value of the targeted user. For instance, suppose a user once lived in two cities, one of which is the user's hometown while the other of which is a city where the user once travelled; the user has a high vote capacity because he/she is structurally close (e.g., he/she shares many common friends with the targeted user) to the targeted user; then the targeted user is more likely to be from the hometown of the user than from the city the user once travelled. 

Therefore, to capture these observations, we divide the vote capacity of a user to its attribute values in proportion to the weights of its attribute links; and each attribute value sums the vote capacities that are divided to it by the users having the attribute value. Intuitively, an attribute value receives more votes if more users with higher vote capacities link to the attribute value via links with higher weights. Formally, we have 
%$\vec{t}_{va}=\sum_{u\in \Gamma_{a, S}} \vec{s}_{vu} \cdot \frac{w_{au}}{d_{u,A}}$,
\begin{align}
\vec{t}_{va}=\sum_{u\in \Gamma_{a, S}} \vec{s}_{vu} \cdot \frac{w_{au}}{d_{u,A}},
\label{phaseii}
\end{align}
where $\vec{t}_{va}$ is the final votes of the attribute value $a$, $\Gamma_{a, S}$ is the set of users who have the attribute value $a$, $d_{u,A}$ is the sum of weights of attribute links that are incident from $u$.  

We treat the summed votes of an attribute value as  its similarity with $v$. Finally, we predict  $v$ has the attribute values that receive the highest votes.

%\emph{Gap statistic}~\cite{gapstat} is a widely used confidence estimator, and the gap statistic of a targeted user is the 
% difference between the largest score and the second  largest score. The intuition behind gap statistic is that if the gap between 
% the largest score and the second  largest score is larger, then it is more likely that the targeted user has the attribute value  with the largest score. 
% However, gap statistic is insufficient to estimate confidence of our attack.
% Specifically, a targeted user could have multiple attribute values for a single attribute, and our attack could produce close scores for these attribute values. 
% Therefore, even if the gap between 
% the largest score and the second  largest score is small, the targeted user is likely to own the top attribute value. 
 
% We propose a new confidence estimator called \emph{clusterness} for our attack. 
% Specifically, we first use a clustering algorithm (e.g., k-means algorithm) to group the scores that our attack produces for all candidate attribute values into two clusters. 
% Then we compute the average score in each cluster, and the clusterness is the difference between the two average scores. 
%The intuition of our  clusterness is that if our attack successfully infers the targeted user's attribute values, there could be a cluster of attribute values whose scores are significantly higher than other attribute values'. 

 \subsection{Confidence Estimation}
 For a targeted user, 
a \emph{confidence estimator} takes the final votes for all attribute values as an input and produces a confidence score. 
A higher confidence score means that attribute inference for the targeted user is more trustworthy. 
We design a  confidence estimator based on clustering techniques. 
%Due to limited space, we show the details of our confidence estimator in Appendix~\ref{app:confidence}.
 A targeted user could have multiple attribute values for a single attribute, and our attack could produce close votes for these attribute values. 
Therefore, we design a  confidence estimator called \emph{clusterness} for our attack. 
Specifically, we first use a clustering algorithm (e.g., k-means~\cite{kmeans}) to group the votes that our attack produces for all candidate attribute values into two clusters. 
 Then we compute the average vote in each cluster, and the clusterness is the difference between the two average votes. 
The intuition of our  clusterness is that if our attack successfully infers the targeted user's attribute values, 
there could be a cluster of attribute values whose votes are significantly higher than other attribute values'. 

Suppose the attacker chooses a confidence threshold and only predicts attributes for targeted users whose confidence scores 
are higher than the threshold. Via setting a larger confidence threshold, the attacker  
will attack less targeted users but could achieve a higher success rate.
In other words, an attacker 
can balance between  the success rates and the number of targeted users to attack via confidence estimation.

\section{Theoretical Analysis}
We analyze the convergence of VIAL and derive the analytical forms of vote capacity vectors, discuss the importance of the backtracking rule, and analyze the complexity of VIAL. 
 
\subsection{Convergence and Analytical Solutions} 
%We analyze the convergence of VIAL and derive the analytical solution of the vote capacity vector obtained  in Phase I of VIAL. Recall that Phase I iteratively distributes vote capacities from the targeted user to the rest of users. 

%In particular, the computed vote capacity vector $\vec{s}_{v}$ is the solution of the following equation:
%\begin{align}
%\vec{s}_v=\alpha \vec{e}_v + (1-\alpha) {\bf M}^T \vec{s}_v
%\label{solution}
%\end{align}

We first show that for any backtracking strength $\alpha \in (0, 1]$, the vote capacity vectors converge. 
%Formally, we have the following theorem.
 \setcounter{theorem}{0} 
\begin{theorem}
\label{converge}
For any backtracking strength $\alpha \in (0, 1]$, the vote capacity vectors $\vec{s}^{(0)}_v$, $\vec{s}^{(1)}_v$, $\vec{s}^{(2)}_v$, $\cdots$ converge, and the converged vote capacity vector is $\alpha(I-(1-\alpha) {\bf M}^T)^{-1} \vec{e}_v$. Formally, we have:
\begin{align}
\vec{s}_v = \lim_{i\to \infty}\vec{s}^{(i)}_v=\alpha(I-(1-\alpha) {\bf M}^T)^{-1} \vec{e}_v,
\end{align}
where $I$ is an identity matrix and $(I-(1-\alpha) {\bf M}^T)^{-1}$ is the inverse of $(I-(1-\alpha) {\bf M}^T)$.
\end{theorem}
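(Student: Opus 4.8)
The iteration in Equation~\ref{iter} is an affine fixed-point recurrence of the form $\vec{s}^{(i)}_v = \alpha\vec{e}_v + (1-\alpha){\bf M}^T\vec{s}^{(i-1)}_v$, so the plan is to unroll it into a matrix geometric (Neumann) series and show that series converges. Writing $N=(1-\alpha){\bf M}^T$ and unrolling from the initial vector $\vec{s}^{(0)}_v$ gives $\vec{s}^{(i)}_v = N^i\vec{s}^{(0)}_v + \alpha\sum_{k=0}^{i-1}N^k\vec{e}_v$. Convergence of both pieces --- the transient $N^i\vec{s}^{(0)}_v\to 0$ and the series $\sum_{k\ge 0}N^k\to (I-N)^{-1}$ --- reduces entirely to the single fact that the spectral radius satisfies $\rho(N)<1$. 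So the whole proof hinges on bounding $\rho\big((1-\alpha){\bf M}^T\big)$.

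The key step, and the only place the structure of VIAL enters, is to show that ${\bf M}$ is (sub-)stochastic. From Equation~\ref{matrix} every entry ${\bf M}_{ux}$ is a non-negative combination of the non-negative quantities $\frac{w_{ux}}{d_{u,S}}$, $w_B(u,x)$, and $w_A(u,x)$, so ${\bf M}\ge 0$ entrywise. For the row sums I would invoke the three identities already verified in the text, namely $\sum_{x\in\Gamma_{u,S}}\frac{w_{ux}}{d_{u,S}}=1$, $\sum_{x\in\Gamma_{u,BS}}w_B(u,x)=1$, and $\sum_{x\in\Gamma_{u,AS}}w_A(u,x)=1$. Summing ${\bf M}_{ux}$ over $x$ and using the indicators $I_{u,Y}$ to switch off the missing neighbor types gives $\sum_x{\bf M}_{ux}=\frac{w_SI_{u,S}+w_{BS}I_{u,BS}+w_{AS}I_{u,AS}}{w_T}=1$ for any node with at least one neighbor (and $0$ for an isolated node). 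Hence $\|{\bf M}\|_\infty\le 1$, and since the spectral radius is bounded above by any induced matrix norm, $\rho({\bf M}^T)=\rho({\bf M})\le\|{\bf M}\|_\infty\le 1$.

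Combining the two pieces finishes the argument: for $\alpha\in(0,1]$ we get $\rho(N)=(1-\alpha)\,\rho({\bf M}^T)\le 1-\alpha<1$. A spectral radius strictly below $1$ immediately yields (i) that $1$ is not an eigenvalue of $N$, so $I-N$ is invertible; (ii) that $N^i\to 0$, killing the transient term; and (iii) that the Neumann series $\sum_{k\ge 0}N^k$ converges to $(I-N)^{-1}$. Taking $i\to\infty$ in the unrolled expression then gives $\vec{s}_v=\alpha(I-N)^{-1}\vec{e}_v=\alpha\big(I-(1-\alpha){\bf M}^T\big)^{-1}\vec{e}_v$, exactly the claimed limit, and this simultaneously shows the limit is independent of the initial $\vec{s}^{(0)}_v$.

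I expect the genuine work to be concentrated in the stochasticity step --- carefully tracking the indicators $I_{u,Y}$ and $\delta_{ux,Y}$ so that the row sum telescopes to $1$ regardless of which neighbor types a node actually possesses, and handling the degenerate isolated-node rows (which only make ${\bf M}$ sub-stochastic and therefore do not hurt the bound). Everything downstream is standard Neumann-series linear algebra. A cleaner alternative that avoids the norm bound would apply Gershgorin or Perron--Frobenius to the column-stochastic matrix $(1-\alpha){\bf M}^T$ directly, but the $\|\cdot\|_\infty$ route above is the most direct way to reach $\rho(N)\le 1-\alpha$.
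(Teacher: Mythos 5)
Your proposal is correct and takes essentially the same route as the paper's proof: both unroll Equation~\ref{iter} into the transient term $(1-\alpha)^i({\bf M}^T)^i\vec{s}^{(0)}_v$ plus the partial Neumann series $\alpha\sum_{k=0}^{i-1}(1-\alpha)^k({\bf M}^T)^k\vec{e}_v$ and pass to the limit. The only difference is the supporting lemma: the paper justifies invertibility of $I-(1-\alpha){\bf M}^T$ by strict diagonal dominance and leaves the vanishing of the transient and the convergence of the series implicit, whereas you make both fully explicit through the row-stochasticity of ${\bf M}$ and the bound $\rho\big((1-\alpha){\bf M}^T\big)\le 1-\alpha<1$, which is a slightly more self-contained version of the same argument.
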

\begin{proof}
See Appendix~\ref{app:converge}.
\end{proof}

Next, we analyze the convergence of VIAL and the analytical form of the vote capacity vector  when the backtracking strength $\alpha = 0$. 
%We note that a vote capacity vector with all zero entries is a trivial converged vector for Equation~\ref{iter}. We analyze the 
%Formally, we have the following theorem.
\begin{theorem}
\label{nobacktracking}
When $\alpha=0$ and the SBA network is connected, the vote capacity vectors $\vec{s}^{(0)}_v$, $\vec{s}^{(1)}_v$, $\vec{s}^{(2)}_v$,  $\cdots$ converge, and the converged vote capacity vector is proportional to the unique stationary distribution of the Markov chain whose transition matrix is {\bf M}. Mathematically, the converged vote capacity vector $\vec{s}_v$ can be represented as:
\begin{align}
\vec{s}_v = |V_s| \vec{\pi},
\end{align}
where $\vec{\pi}$ is the unique stationary distribution of the Markov chain whose transition matrix is {\bf M}. 
\end{theorem}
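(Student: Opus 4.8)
The plan is to observe that, once $\alpha=0$, the update in Equation~\ref{iter} degenerates to the pure power iteration $\vec{s}^{(i)}_v = {\bf M}^T \vec{s}^{(i-1)}_v = ({\bf M}^T)^i \vec{s}^{(0)}_v$, so the statement reduces to the fundamental limit theorem for finite Markov chains. First I would record that ${\bf M}$ is row-stochastic: summing Equation~\ref{matrix} over $x$ and using the three identities $\sum_{x\in \Gamma_{u,S}} \frac{w_{ux}}{d_{u,S}}=1$, $\sum_{x\in \Gamma_{u,BS}} w_B(u,x)=1$, and $\sum_{x\in \Gamma_{u,AS}} w_A(u,x)=1$ together with $w_T=w_SI_{u,S}+w_{BS}I_{u,BS}+w_{AS}I_{u,AS}$ gives $\sum_x {\bf M}_{ux}=1$ for every social node $u$. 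Hence ${\bf M}$ is exactly the transition matrix of a Markov chain on the social nodes, and the $\vec{\pi}$ in the statement is its stationary distribution, normalized so that $\mathbf{1}^T\vec{\pi}=1$.

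Next I would verify the two hypotheses that the ergodic theorem requires. Irreducibility follows from connectedness: two social nodes communicate in one step exactly when they are joined by a social link or share a behavior or attribute neighbor, and these are precisely the pairs with ${\bf M}_{ux}>0$, so connectivity of the SBA network translates into a single communicating class. Perron--Frobenius then guarantees that $1$ is a simple eigenvalue of ${\bf M}$ and that the stationary distribution $\vec{\pi}$ exists and is unique, which already justifies the phrase ``unique stationary distribution'' in the statement.

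The step I expect to be the main obstacle is aperiodicity, because connectedness alone yields only irreducibility, and a periodic chain would make $({\bf M}^T)^i$ oscillate rather than converge. The plan here is to exploit that $\Gamma_{u,BS}$ and $\Gamma_{u,AS}$ contain $u$ itself: whenever $u$ has at least one behavior or attribute neighbor, the behavior-sharing (respectively attribute-sharing) term of Equation~\ref{matrix} contributes $w_B(u,u)>0$ (respectively $w_A(u,u)>0$) to the diagonal, so ${\bf M}_{uu}>0$. A single positive self-loop forces the period of $u$ to be $1$, and in an irreducible chain all states share the same period, so the whole chain is aperiodic. The one delicate corner is a social node that carries neither behaviors nor attributes, which would have no self-loop; I would note that in the settings of interest every user carries behaviors, so the diagonal is strictly positive and aperiodicity holds. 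With irreducibility and aperiodicity in hand, the limit theorem gives $\lim_{i\to\infty}({\bf M}^T)^i = \vec{\pi}\,\mathbf{1}^T$.

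Finally I would apply this limit to the initial vector. Since ${\bf M}\mathbf{1}=\mathbf{1}$, the total vote $\mathbf{1}^T\vec{s}^{(i)}_v = \mathbf{1}^T{\bf M}^T\vec{s}^{(i-1)}_v$ is invariant under the iteration and equals its initial value $\mathbf{1}^T\vec{s}^{(0)}_v=|V_s|$ by Equation~\ref{initial}. Passing to the limit, $\vec{s}_v = \lim_{i\to\infty}({\bf M}^T)^i \vec{s}^{(0)}_v = \vec{\pi}\,\bigl(\mathbf{1}^T\vec{s}^{(0)}_v\bigr) = |V_s|\,\vec{\pi}$, which is exactly the claimed identity.
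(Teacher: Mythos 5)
Your proposal is correct and follows the same route as the paper's proof: with $\alpha=0$, Phase I is power iteration with the row-stochastic matrix ${\bf M}$, viewed as the transition matrix of a Markov chain on the social nodes; irreducibility and aperiodicity give, via Perron--Frobenius, $\lim_{i\to\infty}({\bf M}^T)^i=\vec{\pi}\mathbf{1}^T$, and applying this to $\vec{s}^{(0)}_v$, whose entries sum to $|V_s|$, yields $\vec{s}_v=|V_s|\vec{\pi}$.

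Where you genuinely add something is aperiodicity. The paper simply asserts that connectedness of the SBA network makes the chain ``irreducible and aperiodic,'' but connectedness alone only yields irreducibility: if no social node had behavior or attribute neighbors, ${\bf M}$ would reduce to the random walk on the social graph, which is periodic whenever that graph is bipartite, and $({\bf M}^T)^i$ would oscillate. Your self-loop argument --- since the paper's definitions put $u$ itself in $\Gamma_{u,BS}$ and $\Gamma_{u,AS}$, any node with a behavior or attribute neighbor has ${\bf M}_{uu}>0$, and one self-loop in an irreducible chain forces period one for every state --- is the right way to close this gap, and it is exactly the structural feature of the model that makes the theorem true as stated. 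One small caution: your side remark that ``every user carries behaviors'' is both stronger than the data supports (in the paper's dataset only about a third of users have review behaviors) and stronger than you need; as your own argument notes, a single node with a self-loop suffices, and such a node exists in any SBA network containing at least one behavior or attribute link.
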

\begin{proof}
See Appendix~\ref{app:nobacktracking}. 
\end{proof}

With Theorem~\ref{nobacktracking}, we have the following corollary, which states that the vote capacity of a user is proportional to its weighted degree for certain assignments of  the shares of social neighbors and hop-2 social neighbors in the dividing step. 
 \setcounter{theorem}{0} 
\begin{corollary}
\label{specialcase}
When $\alpha=0$, the SBA network is connected,  and for each user $u$, the shares of social neighbors, behavior-sharing social neighbors, and attribute-sharing social neighbors in the dividing step are $w_S = \tau \cdot {d_{u,S}}$, $w_{BS} = \tau \cdot {d_{u,B}}$, and $w_{AS} = \tau \cdot {d_{u,A}}$, respectively, then we have:
\begin{align}
\vec{s}_{vu} = |V_s| \frac{d_u}{D},
\end{align}
where $\tau$ is any positive number, $d_u$ is the weights of all links of $u$ and $D$ is the twice of the total weights of all links in the SBA network, i.e., $D=\sum_{u} d_{u}$.
\end{corollary}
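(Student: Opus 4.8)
The plan is to reduce the corollary to Theorem~\ref{nobacktracking}, which already guarantees that when $\alpha=0$ and the SBA network is connected the converged vector is $\vec{s}_v=|V_s|\vec{\pi}$, where $\vec{\pi}$ is the \emph{unique} stationary distribution of the Markov chain with transition matrix $\mathbf{M}$. Hence it suffices to exhibit a single stationary distribution and check that it equals $\pi_u=d_u/D$; uniqueness then forces $\vec{\pi}$ to coincide with it, and the claimed formula $\vec{s}_{vu}=|V_s|\,d_u/D$ follows at once. The candidate $\pi_u=d_u/D$ is a genuine probability vector because $D=\sum_u d_u$ is exactly the normalizer, which is also consistent with the conservation of total vote capacity that underlies Theorem~\ref{nobacktracking}.

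First I would substitute the prescribed shares into the entries of $\mathbf{M}$ in Equation~\ref{matrix} and simplify. Since each indicator $I_{u,Y}$ equals $1$ exactly when the corresponding weighted degree is positive (for example $u\in\Gamma_{u,BS}$ iff $\Gamma_{u,B}\neq\emptyset$, so $I_{u,BS}=1$ iff $d_{u,B}>0$), the assignments $w_S=\tau d_{u,S}$, $w_{BS}=\tau d_{u,B}$, $w_{AS}=\tau d_{u,A}$ give $w_T=\tau(d_{u,S}+d_{u,B}+d_{u,A})=\tau d_u$, and the common factor $\tau$ cancels. The social contribution to $\mathbf{M}_{ux}$ collapses to $\frac{w_{ux}}{d_u}$, while the behavior-sharing and attribute-sharing contributions collapse to $\frac{1}{d_u}\sum_{y}\frac{w_{uy}w_{xy}}{d_{y,S}}$, with $y$ ranging over the shared behavior (resp. attribute) neighbors; in each case the per-type normalizer $d_{u,S}$ (resp. $d_{u,B}$, $d_{u,A}$) cancels against the numerator of the corresponding share, leaving only $d_u$ in the denominator.

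Next I would verify the \emph{detailed balance} (reversibility) relations $\pi_u\mathbf{M}_{ux}=\pi_x\mathbf{M}_{xu}$ termwise, which is cleaner than evaluating column sums of $\mathbf{M}$ directly. With $\pi_u=d_u/D$, the factor $d_u/D$ cancels the $1/d_u$ in every simplified entry, so the social term reduces to $\frac{w_{ux}}{D}$ and each hop-2 term reduces to $\frac{1}{D}\sum_{y}\frac{w_{uy}w_{xy}}{d_{y,S}}$. Each of these is manifestly symmetric under $u\leftrightarrow x$: the social term by $w_{ux}=w_{xu}$, and the hop-2 terms because both the index set $\Gamma_{u,B}\cap\Gamma_{x,B}$ (resp. $\Gamma_{u,A}\cap\Gamma_{x,A}$) and the summand $w_{uy}w_{xy}/d_{y,S}$ are invariant under the swap. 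As the three neighbor types enter $\mathbf{M}$ additively and each is individually symmetric, detailed balance holds for the full entry. Summing $\pi_u\mathbf{M}_{ux}=\pi_x\mathbf{M}_{xu}$ over $u$ and using that $\mathbf{M}$ is row-stochastic (established when verifying the dividing step) gives $\sum_u\pi_u\mathbf{M}_{ux}=\pi_x$, i.e. $\vec{\pi}^{\,T}\mathbf{M}=\vec{\pi}^{\,T}$, so $\pi$ is stationary, and by uniqueness it is \emph{the} stationary distribution.

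The main obstacle I anticipate is purely the bookkeeping of the hop-2 terms: one must track that the factor $d_{u,B}/d_u$ (resp. $d_{u,A}/d_u$) arising from $w_{BS}/w_T$ exactly cancels the $1/d_{u,B}$ (resp. $1/d_{u,A}$) hidden inside $w_B(u,x)$ (resp. $w_A(u,x)$), and that self-terms (the case $x=u$, which occurs because $\Gamma_{u,BS}$ and $\Gamma_{u,AS}$ contain $u$) as well as pairs that are simultaneously several neighbor types do not disturb the termwise symmetry. Routing the argument through detailed balance rather than a direct stationarity check is precisely what keeps this manageable, since it avoids reassembling and evaluating the messy column sums of $\mathbf{M}$.
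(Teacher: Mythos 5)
Your proof is correct and follows essentially the same route as the paper: both reduce the claim to Theorem~\ref{nobacktracking} and identify the stationary distribution $\pi_u = d_u/D$ via the same key cancellation (the share $w_{BS}/w_T = d_{u,B}/d_u$ cancels the $1/d_{u,B}$ inside $w_B(u,x)$, and likewise for the other types, leaving symmetric effective edge weights whose sum at $u$ is $d_u$). The only difference is presentational: the paper recognizes the chain as a random walk on a weighted undirected graph $G_w$ and cites the standard fact that such a walk's stationary distribution is proportional to weighted degree, whereas you prove that fact inline by verifying detailed balance, which makes your argument slightly more self-contained but mathematically identical.
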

\begin{proof}
See Appendix~\ref{app:specialcase}. 
\end{proof}

\subsection{Importance of Backtracking}
Theorem~\ref{nobacktracking} implies that when there is no backtracking, the converged vote capacity vector is independent with the targeted users. In other words, VIAL with no backtracking predicts the same attribute values for all targeted users. This explains why VIAL with no backtracking achieves suboptimal performance. We will further empirically evaluate the impact of  backtracking strength in our experiments, and we found that VIAL's performance significantly degrades when there is no backtracking. 

\subsection{Time Complexity}
 The major cost of VIAL is from Phase I, which includes computing ${\bf M}$ and iteratively computing the vote capacity vector.  ${\bf M}$ only needs to be computed once and is applied to all targeted users. 
  ${\bf M}$ is a sparse matrix with $O(m)$ non-zero entries, where $m$ is the number of links in the SBA network.  To compute ${\bf M}$,  for every social node, we need to go through its social neighbors and hop-2 social neighbors; and for a hop-2 social neighbor, we need to go through the common attribute/behavior neighbors between the social node and the hop-2 social neighbor. Therefore, the time complexity of computing  ${\bf M}$ is $O(m)$.
 
 Using sparse matrix representation of  ${\bf M}$, the time complexity of each iteration 
(i.e., applying  Equation~\ref{iter}) in computing the vote capacity vector is $O(m)$. Therefore, 
the time complexity 
of computing the vote capacity vector for one targeted user is $O(d\cdot m)$, where $d$ is the 
number of iterations. Thus, the overall time complexity of VIAL is $O(d\cdot m)$ for one targeted user.

%\myparatight{Time complexity of } 
%${\bf M}$ is a sparse matrix whose number of non-zero entries is 2$m$, 
%where $m$ is the number of links in the SBA network. 
%Using sparse matrix representation of  ${\bf M}$, the time complexity of each iteration 
%(i.e., applying  Equation~\ref{iter}) in VIAL is $O(m)$. Therefore, 
%the time complexity 
%of VIAL for one test user is $O(d\cdot m)$, where $d$ is the 
%number of iterations. 
   
%Literature in matrix theory guarantees that 
%the iterative computations will converge~[].  Algorithm~\ref{} shows our complete algorithm.

%\input{data}
\section{Data Collection}
%We first describe crawling Google Play via reverse engineering its service protocol  and then show some basic measurements of our dataset. 
We collected a dataset from  Google+ and Google Play to evaluate our VIAL attack and previous attacks. Specifically, we collected social structures and user attributes from Google+, and user review behaviors from Google Play. 
Google assigns each user a  21-digit universal ID, which is used in both Google+ and Google Play. We first collected a social network with user attributes from Google+  via iteratively crawling users' friends. 
%Then we reverse engineered the service protocol of Google Play and crawled the review data of users in the Google+ dataset. 
Then we crawled review data of users in the Google+ dataset. All the information that we collected is publicly available. 

%To evaluate various attribute inference attacks, we will randomly sample some users with publicly available attributes, treat them as targeted users, and remove their attributes as groundtruth.

\subsection{Google+ Dataset} 
Each user 
 in Google+ has  an outgoing friend list (i.e., ``in your
circles''), an incoming friend list (i.e., ``have you in circles''), and a profile.  
Shortly after Google+ was launched in late June 2011, Gong et al.~\cite{Gong12-imc,gong2014joint} 
began to crawl daily snapshots of public
Google+ social network structure and user profiles (e.g., \emph{major}, \emph{employer}, and \emph{cities lived}). 
Their dataset includes 79 snapshots of Google+ collected
 from July 6 to October 11, 2011. Each snapshot was a large Weakly Connected Component
 of Google+ social network at the time of crawling.

We obtained one collected snapshot from Gong et al.~\cite{Gong12-imc,gong2014joint}. 
To better approximate friendships between users, we construct an undirected social network from the crawled Google+ dataset via keeping an undirected link between a user $u$ and $v$ if $u$ is in $v$'s both incoming friend list and outgoing friend list. 
After preprocessing,  our Google+ dataset consists of 1,111,905 users and  5,328,308 undirected social links. 

%Please refer to~\cite{Gong12-imc,gong2014joint} for more characteristics of this dataset. 

%We further crawled  the list of items reviewed by users in this Google+ dataset. Moreover, we will leverage this Google+ dataset to study the impact of users' demographics and social structures on users' review behaviors. Since that a user reviewed an item indicates that the user uses the item, we can use users' review data to pry the preferences over items (e.g., apps) of users with different background. 

\myparatight{User attributes} We consider three attributes, \emph{major}, \emph{employer}, and \emph{cities lived}. We note that, although we focus on these attributes that are available to us at a large scale, our attack is also applicable to infer other attributes such as sexual orientation, political views, and religious views. 
 Moreover, some targeted users might not view inferring these attributes as an privacy attack, but 
 an attacker can leverage these attributes to further link users across online social networks~\cite{Bartunov12,goga2013large,stylometrylink,userlinkAcrossSitesWWW13} or even link them with offline records to perform more serious security and privacy attacks~\cite{kanonymity,offlineattack}. 
  
 We take the strings input by a user in its Google+ profile as attribute values.
%\footnote{We transform each letter into its lowercase if it is an English letter, which avoids the influence of uppercases or lowercases.}   
We found that  most attribute values are owned by a small number of users while some are owned by a large number of users. 
Users could fill in their profiles freely in Google+, which could be  one reason that we observe many infrequent attribute values. Specifically, different users might have different names for the same attribute value. For instance, the major of Computer Science could also be abbreviated as CS by some users. Indeed, we find that 20,861 users have Computer Science as their major and 556 users have CS as their major in our dataset. Moreover, small typos (e.g., one letter is incorrect) in the free-form inputs make the same attribute value be treated as different ones.  Therefore, we manually label a set of attribute values.

\emph{1) Major.}
We consider the top-100 majors that are claimed by the most users. We manually merge the majors that actually refer to the same one, e.g., Computer Science and CS, Btech and Biotechnology. After preprocessing, we obtain 62 distinct majors. 8.4\% of users in our dataset have at least one of these majors.

%Moreover, some majors are close to each other in the sense that users with those majors
% might have similar characteristics, behaviors, and interests, e.g., computer science vs. computer engineering.
%Since it is unnecessary to distinguish such majors in some applications such as inferring hidden social relationships between users and targeted advertisements, 
%%On the other hand, distinguishing these close majors might have limited accuracy. 
% we further classify majors into categories. 
%In particular, we first get five major categories constructed by the organization Bigfuture~\cite{bigfture} which is widely used by students to guide the selection of majors and careers. Then we manually analyze the category of each major that we consider. 
%Table~\ref{major} shows the five categories of majors, the number of distinct majors in each category, 
%and the fraction of users in our dataset that have at least one major in a given category. We note that the fractions of users are small because we consider a small number of top majors and 
%only around 30\% of users in our dataset have at least one publicly available  attribute value. 
%

\emph{2) Employer.} Similar to major, we select the top-100 employers that are claimed by the most users and manually merge the employers that  refer to the same one.  We obtain 78 distinct employers, and 3.1\% of users have at least one of these employers.

%Table~\ref{employer} shows the six categories, the number of distinct employers in each category, 
%and the fraction of users that have at least one employer in a given category. 
%
% i.e., \emph{Arts, Humanities, and Social Sciences}, \emph{Information Technology}, \emph{Business}, \emph{Science, Math, and Engineering}, and \emph{Public and Social Services}. Our classifications are based on the classification constructed by the organization Bigfuture~\cite{} which are widely used by high-school students to guide the selection of majors.

\emph{3) Cities lived.} Again, we select the top-100 cities in which most users in the Google+ dataset claimed they have lived in. After we  manually merge the cities that actually refer to the same one, we obtain 70 distinct cities. 8\% of users have at least one of these attribute values.  

%Then, we  map cities to countries.
%% where users have lived.  
%After preprocessing, we have 70 distinct cities and 30 distinct countries including U.S., India, and China. Fig.~\ref{country-user-dis} shows the fraction of users that have lived in a given country. 

\myparatight{Summary and limitations} In total, we consider 210 popular distinct attribute values, including 62 majors, 78 employers, and 70 cities. We acknowledge that our Google+ dataset might not be a representative sample of the recent entire Google+ social network, and thus the inference attack success rates obtained in our experiments might not represent those of the entire Google+ social network. 

%Since it might be sufficient to know the category of attributes in applications such as targeted advertisements, 
% we further classify user attributes into categories. 
% Specifically, we classify majors  into 5 categories (e.g., Information Technology, Humanities), 
% employers into 6 categories (e.g., Universities, Finance), and cities into 30 countries. 
% For more information about how we extract our user attributes and classify them into categories, 
% please refer to Appendix~\ref{app:attribute}.

\subsection{Crawling Google Play} 
%The Google Play market~\cite{googleplay}  is a centralized platform where {developers} can publish their \emph{items} while {users} can download and consume them. 
There are 7 categories of items in Google Play. They are \emph{apps, tv, movies, music, books, newsstand,} and \emph{devices}. 
Google Play provides two review mechanisms for users to provide feedback on an item. They are  the \emph{liking} mechanism and the \emph{rating} mechanism.  In the liking mechanism, a user simply clicks a like button to express his preference about an item. In the rating mechanism, a user gives a rating score which is in the set \{1,2,3,4,5\} as well as a detailed comment to support his/her rating. A score of 1 represents low preference and a score of 5 represents high preference. We call a user \emph{reviewed} an item if the user rated or liked the item.  

User reviews are publicly available in Google Play. Specifically, after a user $u$ logs in Google Play, $u$ can view the list of items reviewed by any user $v$ once $u$ can obtain $v$'s Google ID. We crawled the list of items reviewed by each user in the Google+ dataset. 
%Our crawling was performed between August 2014 and October 2014. 

We find that 33\% of users in the Google+ dataset have reviewed at least one
item. In total, we collected 260,245 items and 3,954,822 reviews.   
%We observe that both the probability distribution of the number of items that were reviewed by a user and the probability distribution of the number of users that reviewed an item follow \emph{long-tail distributions} (e.g., power-law and lognormal distributions).
%Due to limited space, we show these  measurement results in  Appendix~\ref{app:review}.
Since items with too few reviews might not be informative to distinguish users with different attribute values, 
we use items that were reviewed by at least 5 users. After preprocessing, we have 48,706 items and 3,635,231 reviews.

\begin{table}[t]\renewcommand{\arraystretch}{1}
%\vspace{-4mm}
\centering
\caption{Basic statistics of our SBA.}
\addtolength{\tabcolsep}{-4pt}
\begin{tabular}{|c|c|c|c|c|c|} \hline
\multicolumn{3}{|c|}{\small \#nodes} & \multicolumn{3}{c|}{\small \#links} \\ \hline
{{\small social}} & {{\small behavior}} &{{\small attri.}} &{{\small social}} &{{\small behavior}} &{{\small attri.}}\\ \hline
 {\small 1,111,905}  & {\small 48,706} & {\small 210} & {\small 5,328,308} &{\small  3,635,231} & {\small 269,997} \\ \hline
\end{tabular}
\label{data}
%\vspace{-2mm}
\end{table} 

\subsection{Constructing SBA Networks}
We take each user in the Google+ dataset as a social node and links between them as social links.  
For each item in our Google Play dataset, we add a corresponding behavior node. 
 If a user reviewed an item, we create a link between the corresponding social node and the corresponding behavior node. That a user reviewed an item means that the user once used the item. Using similar items could indicate similar  interests, user characteristics, and user attributes. To predict attribute values, we further add  additional attribute nodes to represent attribute values, and
we create a link between a social node and an attribute node if the user has the attribute value. 
%When predicting attribute categories (i.e., category of majors, category of employers, and country), each additional attribute node represents an attribute category, and we create a link between a social node and an attribute node if the user has some attribute value that belongs to the corresponding attribute category. 
Table~\ref{data} shows the basic statistics of our constructed SBA for predicting attribute values. 

In this work, we set the weights of all links in the SBA to be 1. 
Therefore, our attacking result represents a lower bound on what an attacker can achieve in practice. 
An attacker could leverage machine learning techniques (we discuss one in Section~\ref{app:weight}) to
learn link weights to further improve success rates. 
  
%We note that these 
%In practice, these weights could be learned to represent the predictiveness of user attributes. 
% For instance, if we want to predict users' gender,  an app  tracking women's monthly periods should be linked to female users with stronger weights while being linked to male users with weaker weights. Like behavior links, we also set the weights of attribute links to be 1. 
%Table~\ref{data} shows the basic statistics of our constructed SBAs for predicting attribute values and attribute categories. 
%

%\input{exp}

\section{Experiments}
\subsection{Experimental Setup}
\label{sec:setup}
We describe the metrics we adopt to evaluate various attacks, training and testing, and parameter settings.

\myparatight{Evaluation metrics} All attacks we evaluate essentially assign a score for each candidate attribute value. Given a targeted user $v$,  we predict top-$K$ candidate attribute values that have the highest scores for each attribute including major, employer, and cities lived. We use Precision, Recall, and F-score to evaluate the  top-$K$ predictions. In particular, Precision is the fraction of predicted attribute values that belong to $v$. 
%For instance, Precision for top-$1$ prediction is the fraction of users that the attacker can infer at least one attribute value correctly. 
Recall is the fraction of $v$'s attribute values that are among the predicted $K$ attribute values. We address score ties in the manner described by McSherry and Najork~\cite{McSherry08}. 
Precision characterizes how accurate an attacker's inferences are while Recall characterizes how many user attributes are corrected inferred by an attacker. 
In particular, Precision for top-$1$ prediction is the fraction of users that the attacker can correctly infer at least one attribute value. 
F-score is the harmonic mean of Precision and Recall, i.e., we have
$$\text{F-score} = \frac{2 \cdot \text{Precision} \cdot \text{Recall}}{\text{Precision} + \text{Recall}}.$$ 
Moreover, we average the three metrics over all targeted users. 
%We compute these metrics similarly to evaluate attribute-category inference.  
For convenience, we will also use P, R, and F to represent Precision, Recall, and F-Score, respectively.  

We also define \emph{performance gain} and \emph{relative performance gain} of one attack $\mathscr{A}$ over another attack $\mathscr{B}$ to compare their relative performances. We take Precision as an example to show their definitions as follows:
\begin{align}
&\text{\bf Performance gain:} \nonumber \\
\Delta \text{P} &=  \text{Precision}_{\mathscr{A}} - \text{Precision}_{\mathscr{B}} \nonumber 
\end{align}
\begin{align}
&\text{\bf Relative performance gain:} \nonumber \\
\Delta \text{P}\% &=  \frac{\text{Precision}_{\mathscr{A}} - \text{Precision}_{\mathscr{B}}}{\text{Precision}_{\mathscr{B}}}\times 100\% \nonumber
\end{align}

\myparatight{Training and testing} For each attribute value, we sample 5 users uniformly at random from the users that have the attribute value and have reviewed at least 5 items, and we treat them as test (i.e., targeted) users. In total, we have around 1,050 test users. 
%In the experiments of inferring attribute categories, we transform the attribute values of the test users to the corresponding attribute categories. 
For test users, we remove their attribute links from the SBA network and use them as groundtruth. We repeat the experiments 10 times and average the evaluation metrics over the 10 trials. 

%We note that, in some trials, the number of test users is less than 1050 because some sampled test users have more than one attribute values.

\myparatight{Parameter settings}  In the dividing step, we set equal shares for social neighbors, behavior-sharing social neighbors, and attribute-sharing social neighbors, i.e.,  $w_S=w_{BS}=w_{AS}=\frac{1}{3}$. The number of iterations to compute the vote capacity vector is $d=\lfloor \text{log}\ |V_s| \rfloor$=20, after which the vote capacity vector converges. 
Unless otherwise stated, we set the backtracking strength $\alpha=0.1$. 
%Moreover, we do not use confidence estimation by default, i.e., we attack all test users. 

\subsection{Compared Attacks}
We compare VIAL with friend-based attacks, behavior-based attacks, and attacks that use both social structures and behaviors. These attacks essentially assign a score for each candidate attribute value, and return the $K$ attribute values that have the highest scores. Suppose $v$ is a test user and $a$ is an attribute value, and we denote by $S(v,a)$ the score assigned to $a$ for $v$.

\myparatight{Random}  This baseline method computes the fraction of users in the training dataset that have a certain attribute value $a$,  and it treats such fraction as the score $S(v,a)$ for all test users.

\begin{figure*}[!t]
%\vspace{-4mm}
\centering
\subfloat[Precision]{\includegraphics[width=0.33 \textwidth]{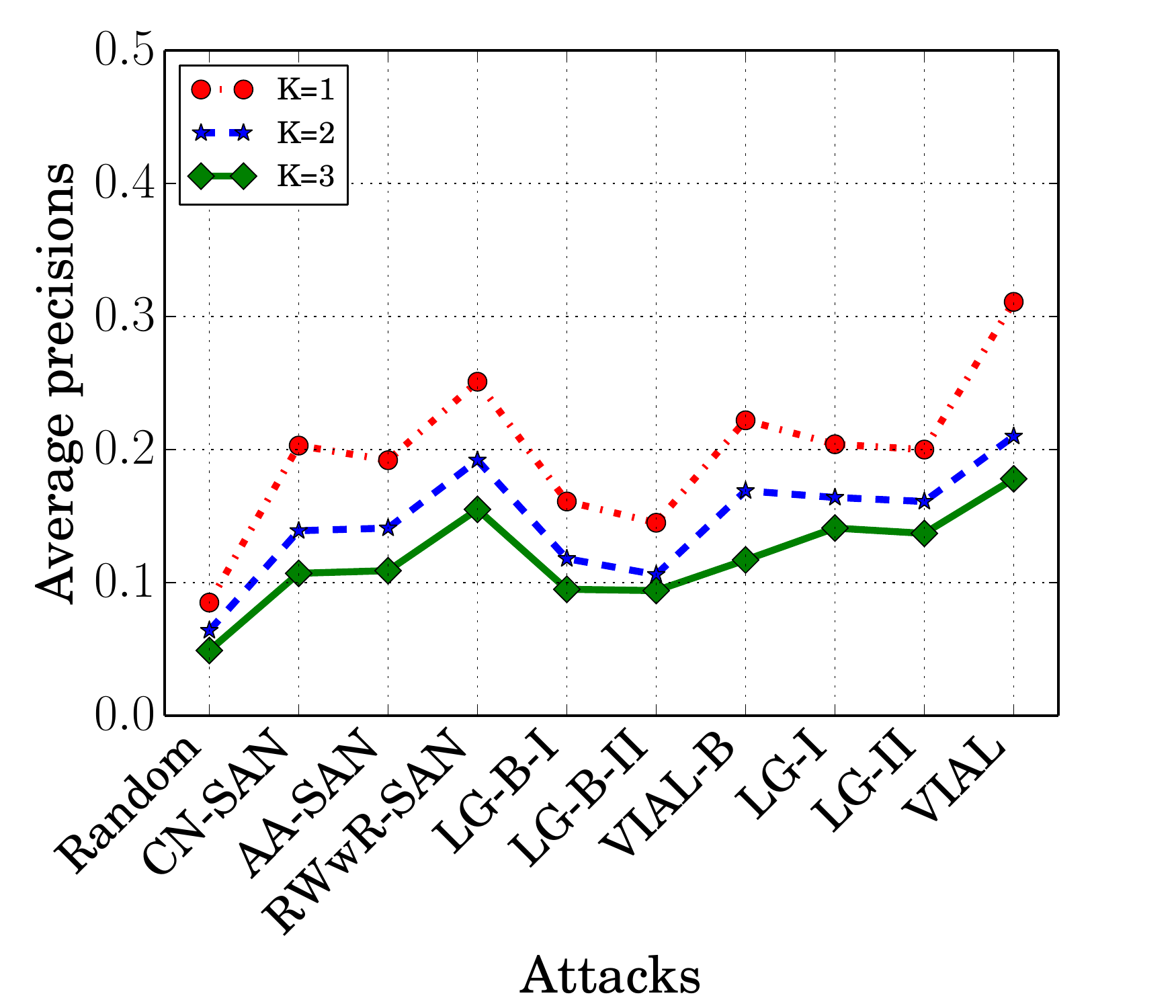}} 
\subfloat[Recall]{\includegraphics[width=0.33 \textwidth]{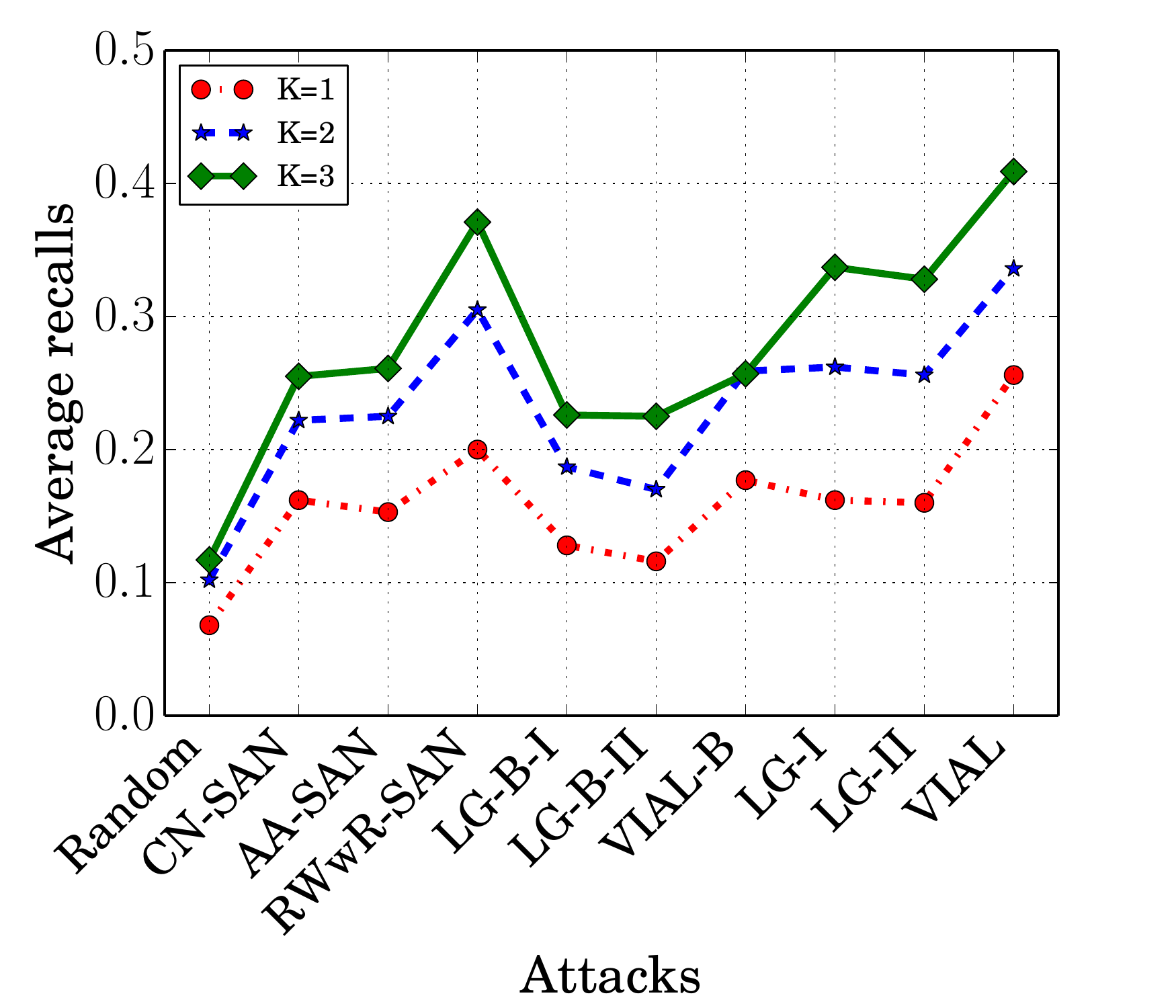}} 
\subfloat[F-score]{\includegraphics[width=0.33 \textwidth]{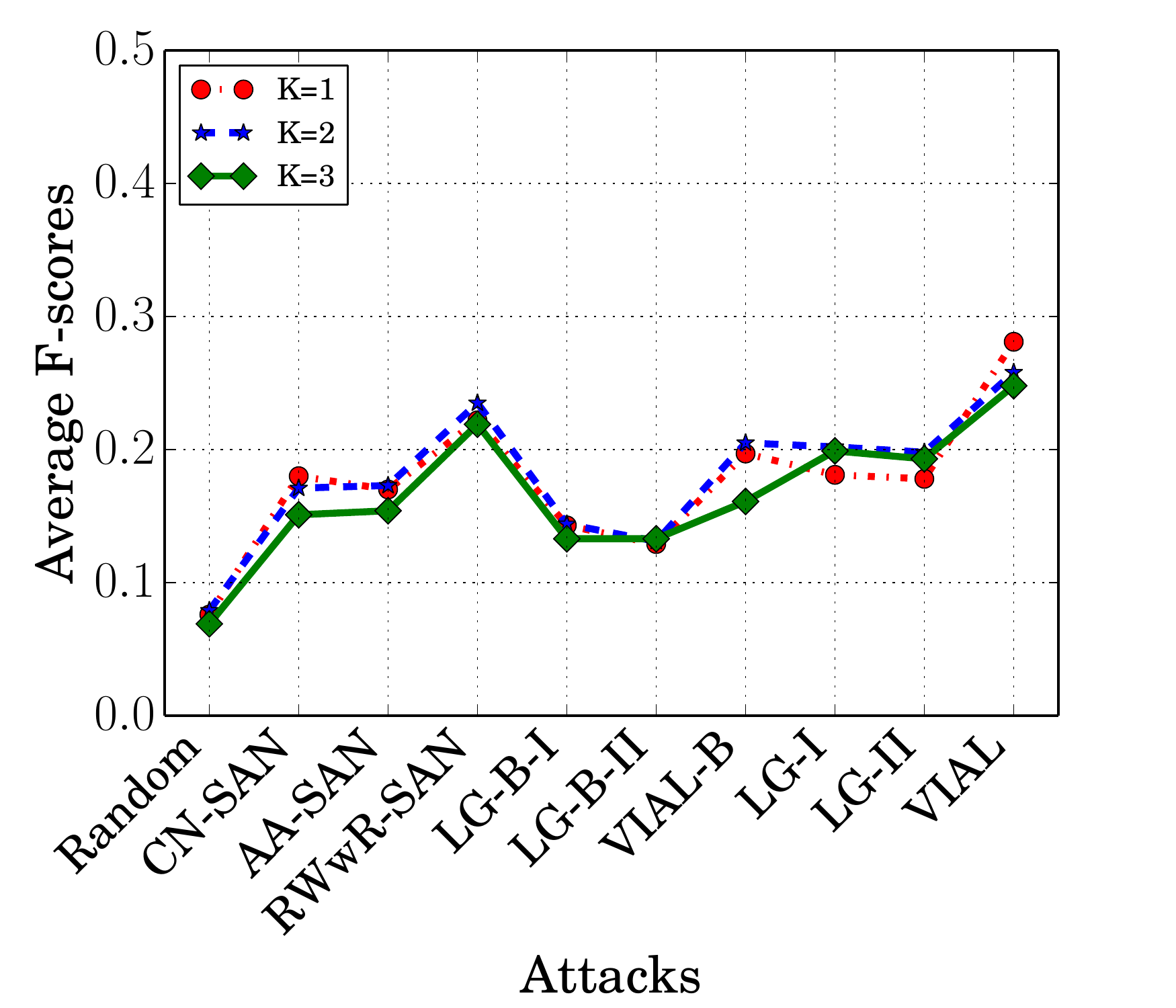}} 

\caption{Precision, Recall, and F-Score for inferring majors. Although these attacks do not have temporal orderings, we connect them via curves in the figures to better contrast them.}
\vspace{4mm}
\label{major-infer}
\end{figure*}

%\begin{figure*}[!t]
%%\vspace{-4mm}
%\centering
%\subfloat[Precision]{\includegraphics[width=0.33 \textwidth]{{major-category-precision}.pdf}} 
%\subfloat[Recall]{\includegraphics[width=0.33 \textwidth]{{major-category-recall}.pdf}} 
%\subfloat[F-score]{\includegraphics[width=0.33 \textwidth]{{major-category-f}.pdf}} 
%
%\caption{Precision, Recall, and F-Score for inferring major categories.}
%%\vspace{-1mm}
%\label{major-category-infer}
%\end{figure*}
%

\begin{figure*}[!t]
\vspace{4mm}
\centering
\subfloat[Precision]{\includegraphics[width=0.33 \textwidth]{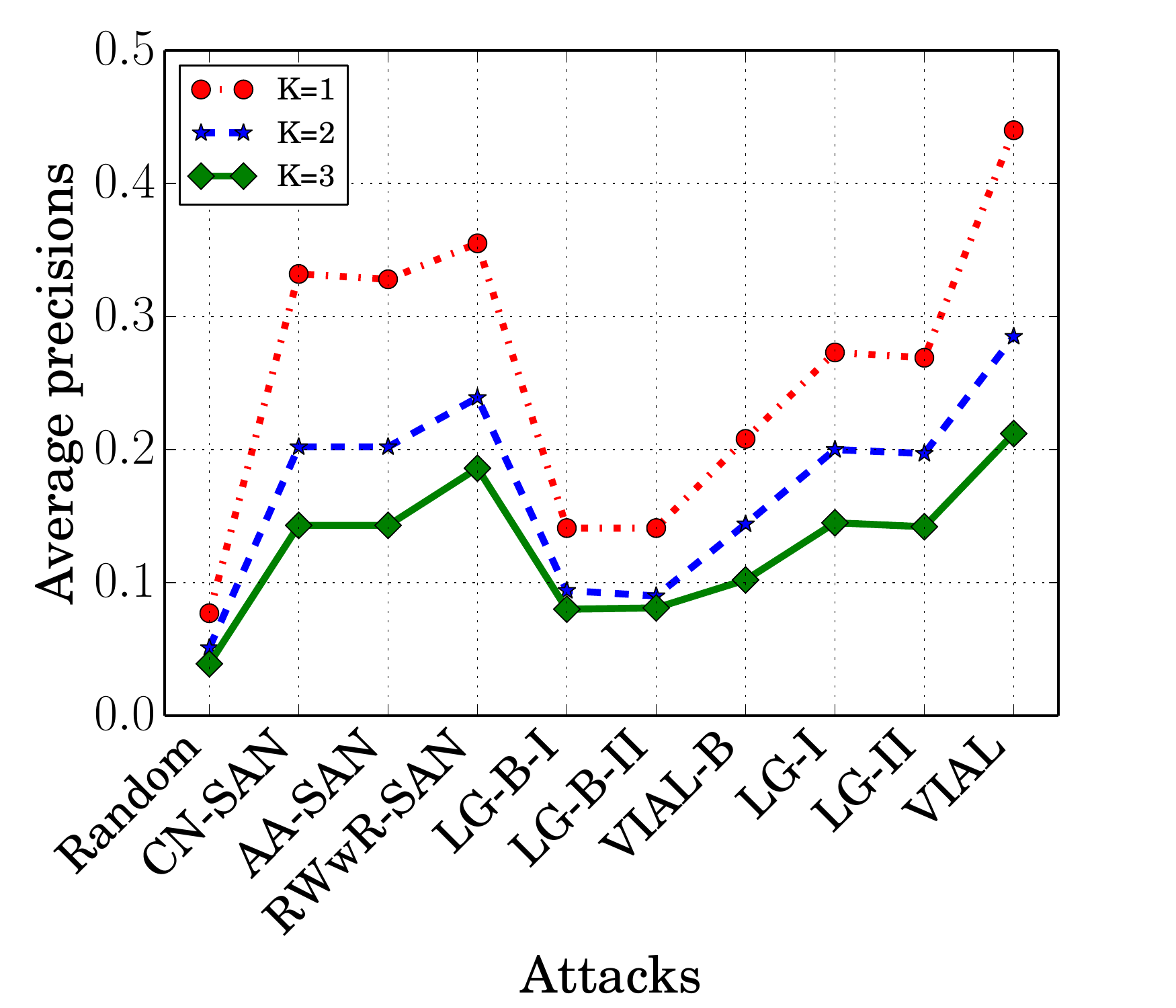}} 
\subfloat[Recall]{\includegraphics[width=0.33 \textwidth]{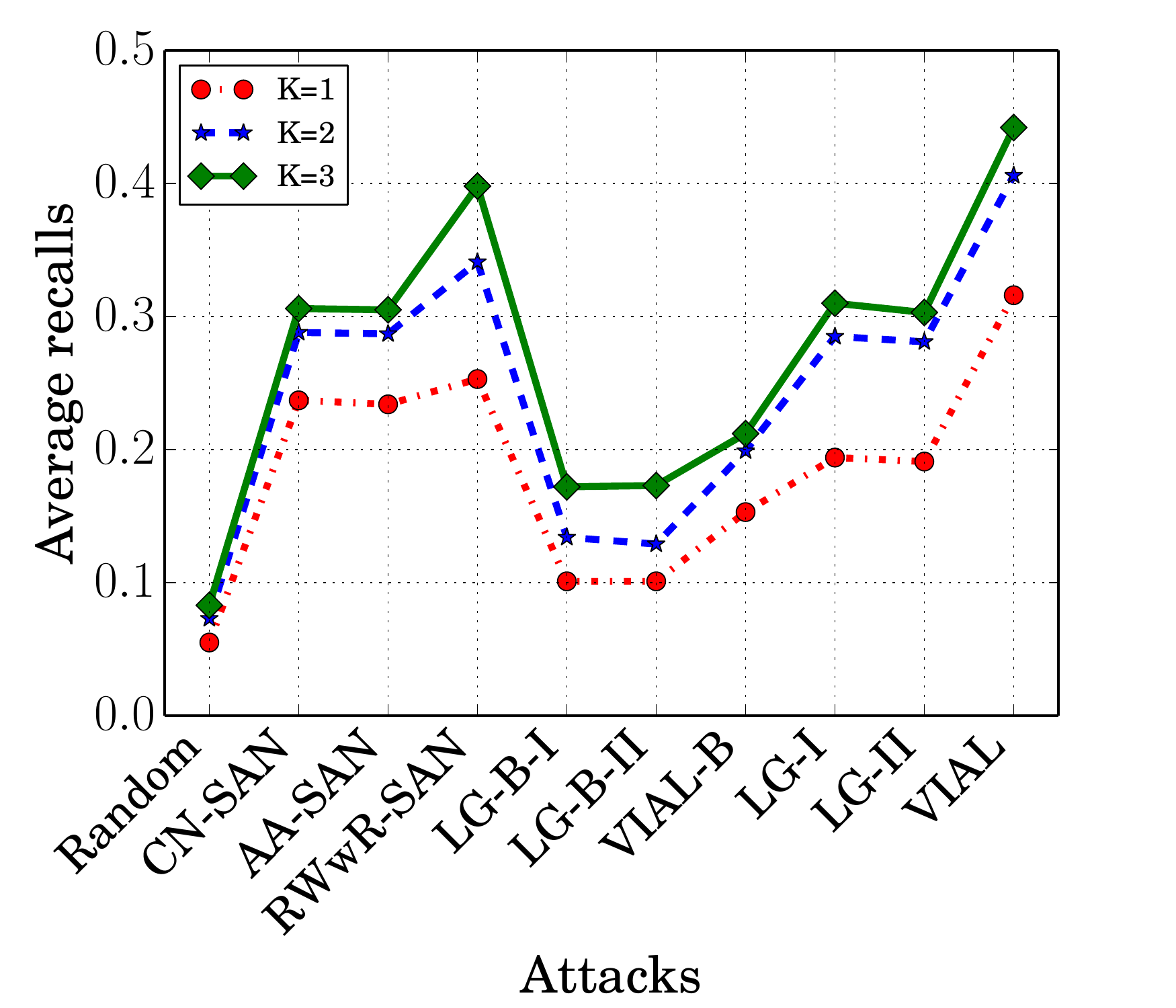}} 
\subfloat[F-score]{\includegraphics[width=0.33 \textwidth]{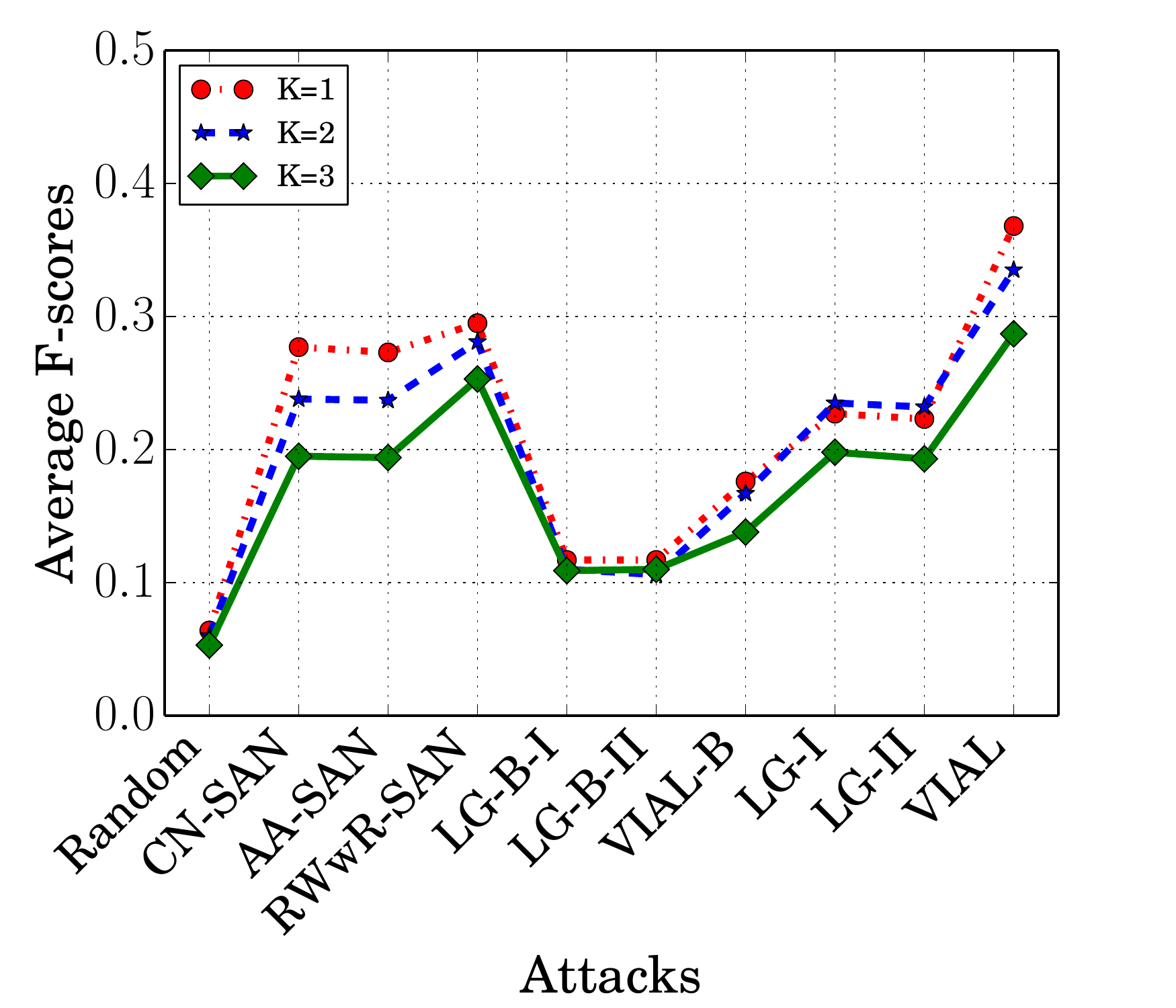}} 

\caption{Precision, Recall, and F-Score for inferring employers.}
\vspace{4mm}
\label{employer-infer}
\end{figure*}

%\begin{figure*}[!t]
%%\vspace{-4mm}
%\centering
%\subfloat[Precision]{\includegraphics[width=0.33 \textwidth]{{employer-category-precision}.pdf}} 
%\subfloat[Recall]{\includegraphics[width=0.33 \textwidth]{{employer-category-recall}.pdf}} 
%\subfloat[F-score]{\includegraphics[width=0.33 \textwidth]{{employer-category-f}.pdf}} 
%
%\caption{Precision, Recall, and F-Score for inferring employer categories.}
%\vspace{1mm}
%\label{employer-category-infer}
%\end{figure*}

\begin{figure*}[!t]
\vspace{4mm}
\centering
\subfloat[Precision]{\includegraphics[width=0.33 \textwidth]{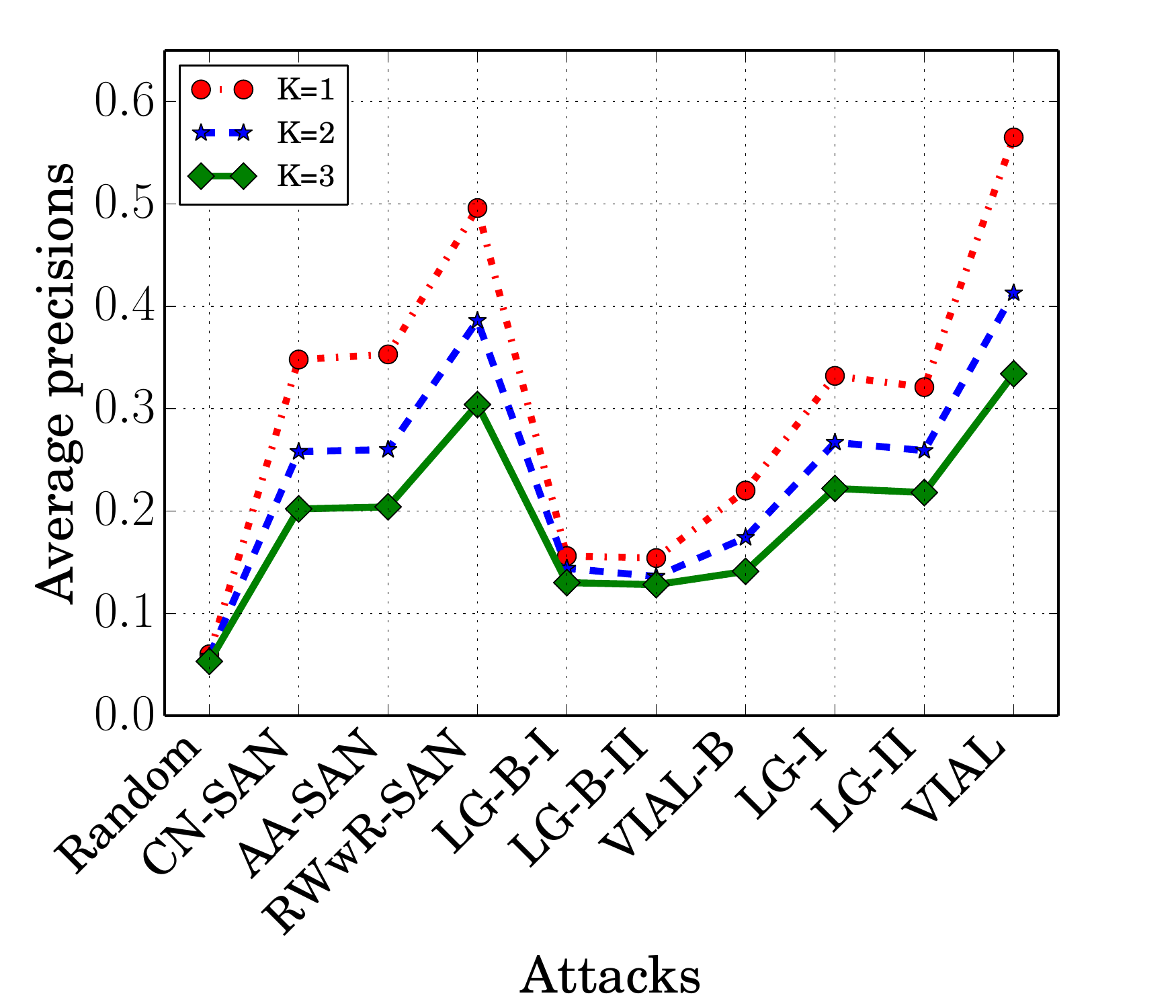}} 
\subfloat[Recall]{\includegraphics[width=0.33 \textwidth]{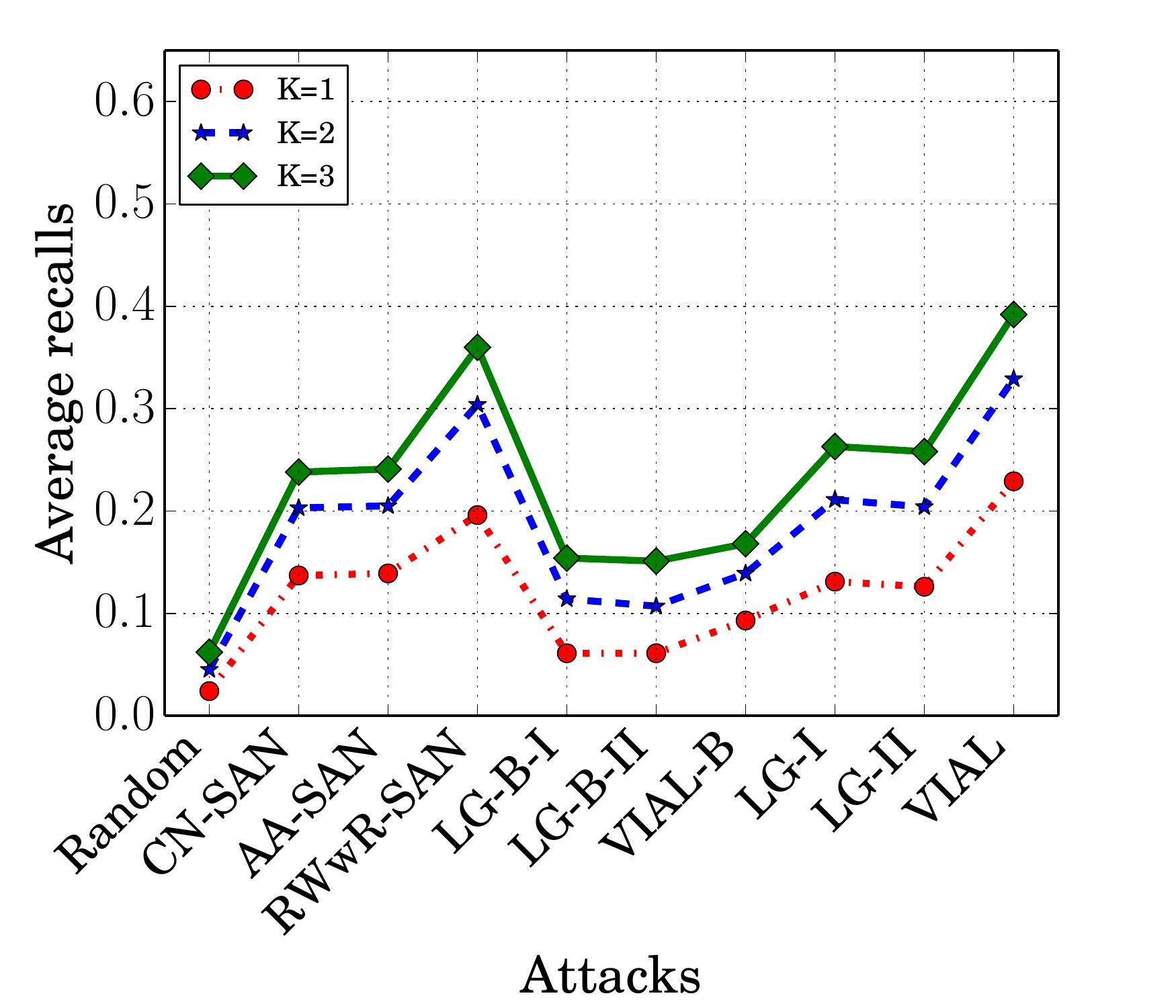}} 
\subfloat[F-score]{\includegraphics[width=0.33 \textwidth]{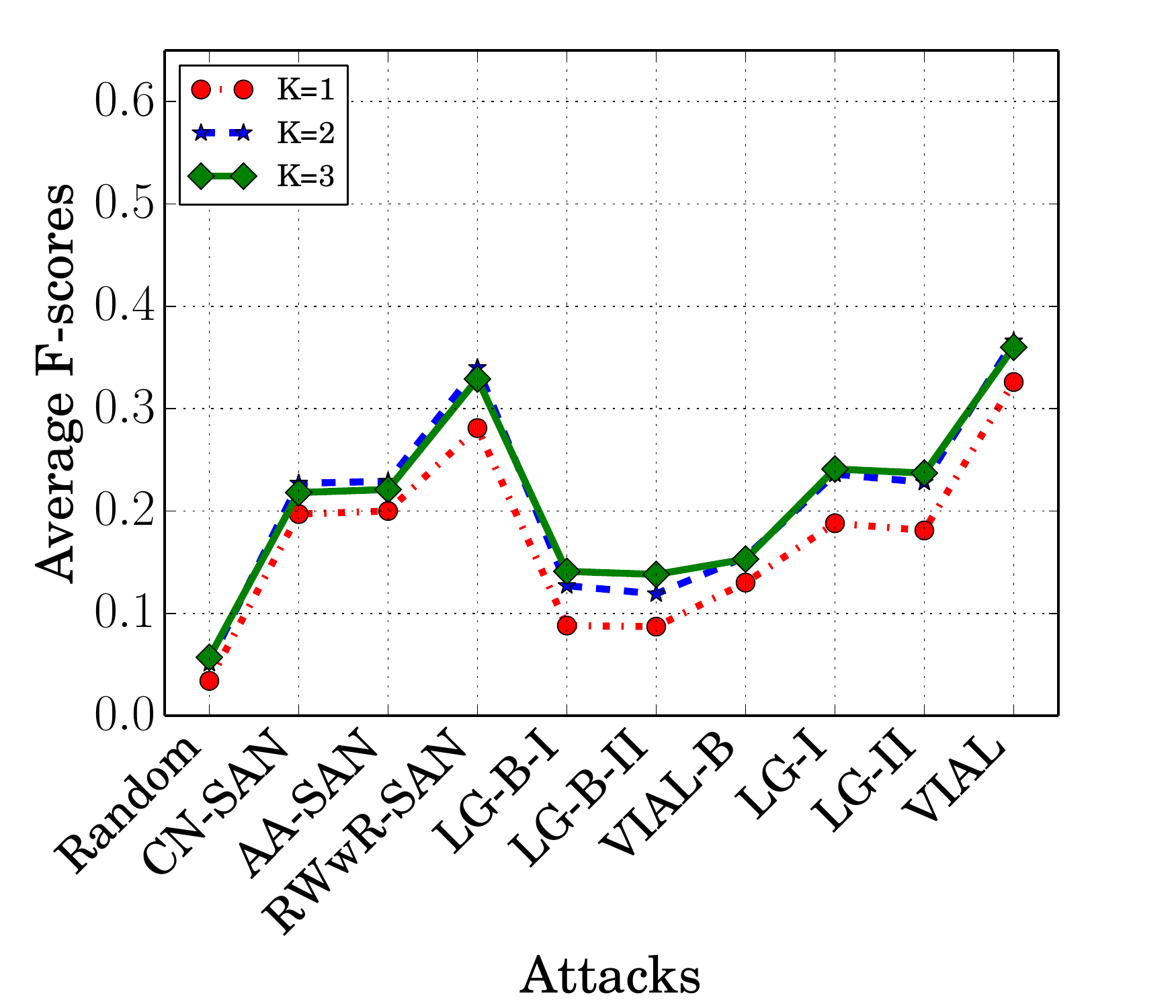}} 

\caption{Precision, Recall, and F-Score for inferring cities.}
%\vspace{-4mm}
\label{city-infer}
\end{figure*}

%\begin{figure*}[!t]
%%\vspace{-4mm}
%\centering
%\subfloat[Precision]{\includegraphics[width=0.33 \textwidth]{{country-precision}.pdf}} 
%\subfloat[Recall]{\includegraphics[width=0.33 \textwidth]{{country-recall}.pdf}} 
%\subfloat[F-score]{\includegraphics[width=0.33 \textwidth]{{country-f}.pdf}} 
%
%\caption{Precision, Recall, and F-Score for inferring countries.}
%%\vspace{-4mm}
%\label{country-infer}
%\end{figure*}
%

\myparatight{Friend-based attacks} We compare with three friend-based attacks, i.e., CN-SAN, AA-SAN, and RWwR-SAN~\cite{gong2014joint}. They were shown to outperform previous attacks such as LINK~\cite{Zheleva09,gong2014joint}. 
%For a candidate attribute value/category $a$ and a test user $v$, the set of common social friends shared by $v$ and $a$ are social friends of $v$ that have the attribute value $a$. 
\begin{itemize}
\item {\bf CN-SAN.} $S(v,a)$ is the number of  common social neighbors between $v$ and $a$.
\item {\bf AA-SAN.} This attack weights the importance of each common social neighbor between $v$ and $a$ proportional to the inverse of the log of its number of neighbors. Formally, $S(v,a)=\sum_{u\in \Gamma_{v, S} \cap \Gamma_{a, S}} \frac{1}{\text{log}|\Gamma_{u}|}$.

\item {\bf RWwR-SAN.} RWwR-SAN augments the social network with additional attribute nodes. Then it performs a random walk that is initialized from the test user $v$ on the augmented graph. The stationary probability of the attribute node that corresponds to $a$ is treated as the score $S(v,a)$. 
%We note that RWwR-SAN is a special case of VIAL. Specifically, VIAL reduces to RWwR-SAN if we remove behavior nodes and behavior links from the SBA network.  

\end{itemize}

\begin{table}[t]\renewcommand{\arraystretch}{1}
\centering
\caption{Performance gains and relative performance gains of RWwR-SAN over other friend-based attacks, where $K=1$.  Results are averaged over all attributes. We find that RWwR-SAN is the best friend-based attack.}
{
%\addtolength{\tabcolsep}{-3pt}
\begin{tabular}{|c|c|c|c|c|c|c|} \hline
{\small Attack}&{\small $\Delta$P} &{\small $\Delta$P\%} & {\small $\Delta$R}& {\small $\Delta$R\%}&{\small $\Delta$F}&{\small $\Delta$F\%}\\ \hline
{\small CN-SAN}&{\small 0.07}&{\small 24\%}&{\small 0.04}&{\small 24\%}&{\small 0.05}&{\small 24\%}\\ \hline
{\small AA-SAN}&{\small 0.08}&{\small 26\%}&{\small 0.04}&{\small 26\%}&{\small 0.05}&{\small 26\%}\\ \hline
\end{tabular}}
\label{friend-based}
%\vspace{-5mm}
\end{table}

\begin{table}[t]\renewcommand{\arraystretch}{1}
\centering
\caption{Performance gains and relative performance gains of VIAL-B over other behavior-based attacks, where $K=1$. We find that VIAL-B is the best behavior-based attack.}
{
%\addtolength{\tabcolsep}{-3pt}
\begin{tabular}{|c|c|c|c|c|c|c|} \hline
{\small Attack}&{\small $\Delta$P} &{\small $\Delta$P\%} & {\small $\Delta$R}& {\small $\Delta$R\%}&{\small $\Delta$F}&{\small $\Delta$F\%}\\ \hline
{\small LG-B-I}&{\small 0.06}&{\small 42\%}&{\small 0.04}&{\small 47\%}&{\small 0.05}&{\small 45\%}\\ \hline
{\small LG-B-II}&{\small 0.07}&{\small 47\%}&{\small 0.05}&{\small 52\%}&{\small 0.06}&{\small 50\%}\\ \hline
\end{tabular}}
\label{behavior-based}
%\vspace{-4mm}
\end{table} 

\myparatight{Behavior-based attacks} We also evaluate three behavior-based attacks. 
\begin{itemize}
\item  {\bf Logistic regression (LG-B-I)~\cite{weinsberg2012blurme}.} LG-B-I treats each attribute value  as a class and learns a multi-class logistic regression classifier with the training dataset. Specifically,  LG-B-I extracts a feature vector whose length is the number of items for each user that has review data, and a feature has a value of the rating score that the user gave to the corresponding item. Google Play allows users to rate or like an item, and we treat a liking as a rating score of 5.  For a test user, the learned logistic regression classifier returns a posterior probability distribution over the possible attribute values, which are used as the scores  $S(v,a)$. Weinsberg et al.~\cite{weinsberg2012blurme} showed that logistic regression classifier outperforms other classifiers including SVM~\cite{Cortes95} and Naive Bayes~\cite{McCallum98}. 

\item {\bf Logistic regression with binary features (LG-B-II)~\cite{kosinski2013private}.} The difference between LG-B-II and LG-B-I is that LG-B-II extracts binary feature vectors for users. Specifically, a feature has a value of 1 if the user has reviewed the corresponding item.   

\item {\bf VIAL-B.} A variant of VIAL that only uses behavior data. Specifically, we remove social links from the SBA network and perform our VIAL attack using the remaining links.  
\end{itemize}

\myparatight{Attacks combining social structures and behaviors}
Intuitively, we can combine social structures and behaviors via concatenating social structure features with behavior features. 
We  compare with two such attacks.
 \begin{itemize}
\item {\bf Logistic regression (LG-I).}  LG-I extracts a binary feature vector whose length is the number of users from social structures for each user, and a feature has a value of 1 if the user is a friend of the person that corresponds to the feature. Then LG-I  concatenates this feature vector with the one used in LG-B-I and learns multi-class logistic regression classifiers. 

\item {\bf Logistic regression with binary features (LG-II).} LG-II concatenates the binary social structure feature vector with the binary behavior feature vector used by LG-B-II. 

\end{itemize}

We use the popular package LIBLINEAR~\cite{liblinear08} to learn logistic regression classifiers. 

\begin{table}[t]\renewcommand{\arraystretch}{1}
\centering
\caption{Performance gains and relative performance gains of VIAL over other attacks combining social structures and behaviors, where $K=1$.  We find that VIAL substantially outperforms other attacks.}
{
%\addtolength{\tabcolsep}{-3pt}
\begin{tabular}{|c|c|c|c|c|c|c|} \hline
{\small Attack}&{\small $\Delta$P} &{\small $\Delta$P\%} & {\small $\Delta$R}& {\small $\Delta$R\%}&{\small $\Delta$F}&{\small $\Delta$F\%}\\ \hline
{\small LG-I}&{\small 0.17}&{\small 61\%}&{\small 0.10}&{\small 65\%}&{\small 0.13}&{\small 63\%}\\ \hline
{\small LG-II}&{\small 0.18}&{\small 65\%}&{\small 0.11}&{\small 69\%}&{\small 0.13}&{\small 67\%}\\ \hline
\end{tabular}}
\label{both-based}
%\vspace{-4mm}
\end{table} 

%\begin{table}[t]\renewcommand{\arraystretch}{1.3}
%\centering
%\caption{Performance gains of VIAL over Random, RWwR-SAN (the best friend-based approach),  and VIAL-B (the best behvior-based approach). The performance gains are averaged over all attributes or attribute categories.}
%\subfloat[{\small Predicting attributes}]{
%%\addtolength{\tabcolsep}{-5pt}
%\begin{tabular}{|c|c|c|c|} \hline
%{\small Algorithm}&{\small $\Delta$Precision} & {\small $\Delta$Recall}&{\small $\Delta$F-Score}\\ \hline
%{\small Random}&{\small 0.568}&{\small 0.272}&{\small 0.368}\\ \hline
%{\small RWwR-SAN}&{\small 0.557}&{\small 0.267}&{\small 0.361}\\ \hline
%{\small VIAL-B}&{\small \bf 0.815}&{\small \bf 0.391}&{\small \bf 0.528}\\ \hline
%\end{tabular}}
%
%\subfloat[{\small Predicting attribute categories}]{
%%\addtolength{\tabcolsep}{-5pt}
%\begin{tabular}{|c|c|c|c|} \hline
%{\small Algorithm}&{\small $\Delta$Precision} & {\small $\Delta$Recall}&{\small $\Delta$F-Score}\\ \hline
%{\small Random}&{\small 0.568}&{\small 0.272}&{\small 0.368}\\ \hline
%{\small RWwR-SAN}&{\small 0.557}&{\small 0.267}&{\small 0.361}\\ \hline
%{\small VIAL-B}&{\small \bf 0.815}&{\small \bf 0.391}&{\small \bf 0.528}\\ \hline
%\end{tabular}}
%
%\label{gain}
%%\vspace{-4mm}
%\end{table} 

\begin{table}[t]\renewcommand{\arraystretch}{1}
\centering
\caption{Performance gains and relative performance gains of VIAL over Random, RWwR-SAN (the best friend-based attack),  and VIAL-B (the best behavior-based attack), where $K=1$. }
{
\addtolength{\tabcolsep}{-3pt}
\begin{tabular}{|c|c|c|c|c|c|c|} \hline
{\small Attack}&{\small $\Delta$P} &{\small $\Delta$P\%} & {\small $\Delta$R}& {\small $\Delta$R\%}&{\small $\Delta$F}&{\small $\Delta$F\%}\\ \hline
{\small Random}&{\small 0.36}&{\small 526\%}&{\small 0.22}&{\small 535\%}&{\small 0.27}&{\small 534\%}\\ \hline
{\small RWwR-SAN}&{\small 0.07}&{\small 20\%}&{\small 0.05}&{\small 23\%}&{\small 0.06}&{\small 22\%}\\ \hline
{\small VIAL-B}&{\small 0.22}&{\small 102\%}&{\small 0.13}&{\small 99\%}&{\small 0.16}&{\small 100\%}\\ \hline
\end{tabular}}
\label{gain}
%\vspace{-4mm}
\end{table} 

\subsection{Results}
Fig.~\ref{major-infer}-Fig.~\ref{city-infer} demonstrate the Precision, Recall, and F-score for top-$K$  inference of major, employer, and city, where $K=1,2,3$. Table~\ref{friend-based}-Table~\ref{gain} compare different attacks using results that are averaged over all attributes.  Our metrics are averaged  over 10 trials.  We find that standard deviations of the metrics are very small, and thus we do not show them for simplicity. 
%We show these figures in the Appendix to increase readability. 
Next, we describe several key observations we have made from these results.

\myparatight{Comparing friend-based attacks} We find that RWwR-SAN  performs the best among the friend-based attacks. Our observation is consistent with the previous work~\cite{gong2014joint}. To better illustrate the difference between the friend-based attacks, we show the performance gains and relative performance gains of RWwR-SAN over other friend-based attacks in Table~\ref{friend-based}. Please refer to Section~\ref{sec:setup} for formal definitions of (relative) performance gains. 
The (relative) performance gains are averaged over all attributes (i.e., major, employer, and city). The reason why RWwR-SAN outperforms other friend-based attacks is that RWwR-SAN performs a random walk among the augmented graph, which better leverages the graph structure, while other attacks simply count the number of common neighbors or weighted common neighbors.  
%AA-SAN outperforms CN-SAN because of penalizing common neighbors by their popularity. 

\myparatight{Comparing behavior-based attacks} We find that VIAL-B  performs the best among the behavior-based attacks.  Table~\ref{behavior-based}  shows the average performance gains and relative performance gains of VIAL-B over other behavior-based attacks.  Our results indicate that our graph-based attack is a better way to leverage behavior structures, compared to LG-B-I and LG-B-II, which flatten the behavior structures into feature vectors. Moreover, LG-B-I and LG-B-II achieve very close performances, which indicates that the rating scores carry little information about user attributes. 

% slightly outperforms LG-B-II because LG-B-I leverages rating scores to better model user preferences. 

\myparatight{Comparing attacks combining social structure and behavior} We find that VIAL performs the best among the attacks combining social structures and behaviors. Table~\ref{both-based} shows the average performance gains and relative performance gains of VIAL over other attacks. Our results  imply that, compared to flattening the structures into feature vectors, our graph-based attack can better integrate social structures and user behaviors. 

%We find that RWwR-SAN and VIAL-B perform the best among the friend-based approaches and the behavior-based approaches, respectively.  VIAL substantially outperforms RWwR-SAN and VIAL-B, indicating the importance of combining social structures and behaviors to perform attribute inference. For instance, when predicting top-1 attribute value or attribute category,  VIAL achieves 0.03 to 0.07 higher F-score than RWwR-SAN and 0.04 to 0.25 higher F-score than VIAL-B. Moreover, among the approaches that combine social structures and behaviors, our algorithm VIAL substantially outperforms the feature-concatenation approaches LG-I and LG-II, implying that our graph-based approach is a better way to integrate social structures and user behaviors. Furthermore, we find that LG-I outperforms LG-II and LG-B-I outperforms LG-B-II. This is because LG-I and LG-B-I use the rating scores as behavior features, which can better profile users.   

\myparatight{Comparing VIAL with the best friend-based attack and the best behavior-based attack} Table~\ref{both-based} shows the average performance gains and relative performance gains of VIAL over Random, the best friend-based attack, and the best behavior-based attack. We find that VIAL significantly outperforms these attacks, indicating the importance of combining social structures and behaviors to perform attribute inference. This implies that, when an attacker wants to attack user privacy via inferring their private attributes, the attacker can successfully attack substantially more users using VIAL.

%\myparatight{Comparing different attributes} We find that \emph{cities are easier to infer than employers, which are easier to infer than majors}. 
%%For instance, VIAL achieves 0.82 Precision  for inferring top-1 country, which is 0.18 higher than inferring top-1 employer category, which in turn achieves 0.09 higher Precision than inferring top-1 major category. 
%One reason might be that some apps are only popular in certain cities~\cite{xu2011identifying} and users in different cities use different apps that are customized according to the cultures, e.g., apps that are popular in two countries might use different languages. Such apps make it easier to distinguish users from different cities or countries. 

 \begin{figure}[!t]
%\vspace{-2mm}
\centering
{\includegraphics[width=0.45 \textwidth]{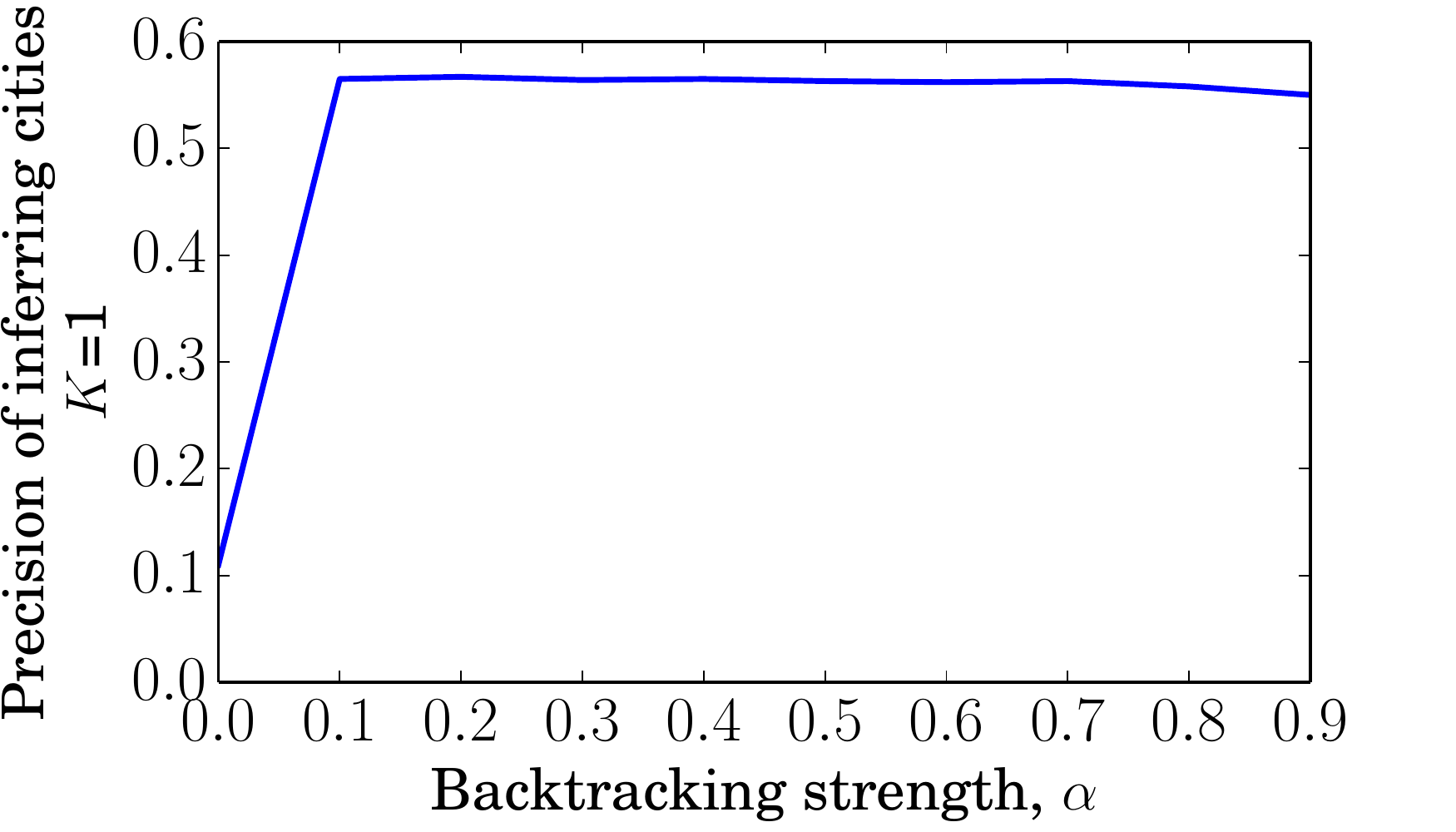}}
\caption{Impact of the backtracking strength  on the Precision of VIAL for  inferring cities. We observe that backtracking substantially improves VIAL's performance.}
%We observe that backtracking substantially improves VIAL's performance. }
%Second, we find that inference performances are stable across  different backtracking strengths once they are larger than 0.}
%\vspace{-4mm}
 \label{impact-of-alpha}
\end{figure}

\myparatight{Impact of backtracking strength} Fig.~\ref{impact-of-alpha} shows the impact of backtracking strength on the Precision of VIAL for  inferring cities. According to Theorem~\ref{converge}, VIAL with $\alpha=1$ reduces to random guessing, and thus we do not show the corresponding result in the figure. $\alpha=0$ corresponds to the case in which VIAL does not use backtracking. We observe that not using backtracking substantially decreases the performance of VIAL. The reason might be that 1) $\alpha=0$ makes VIAL predict the same attribute values for all test users, according to Theorem~\ref{nobacktracking}, and 2) a user' attributes are close to the user in the SBA network and backtracking makes it  more likely for votes to be distributed among these attribute nodes. Moreover, we find that inference accuracies are  stable across  different backtracking strengths once they are larger than 0. The reason is that when we increase the backtracking strength,  attribute values receive different votes, but the ones with top ranked votes only change slightly. We observe similar results for  other attributes.
% and thus we do not show them for simplicity.   

% \begin{figure}[!t]
%%\vspace{-4mm}
%\centering
%\includegraphics[width=0.35 \textwidth]{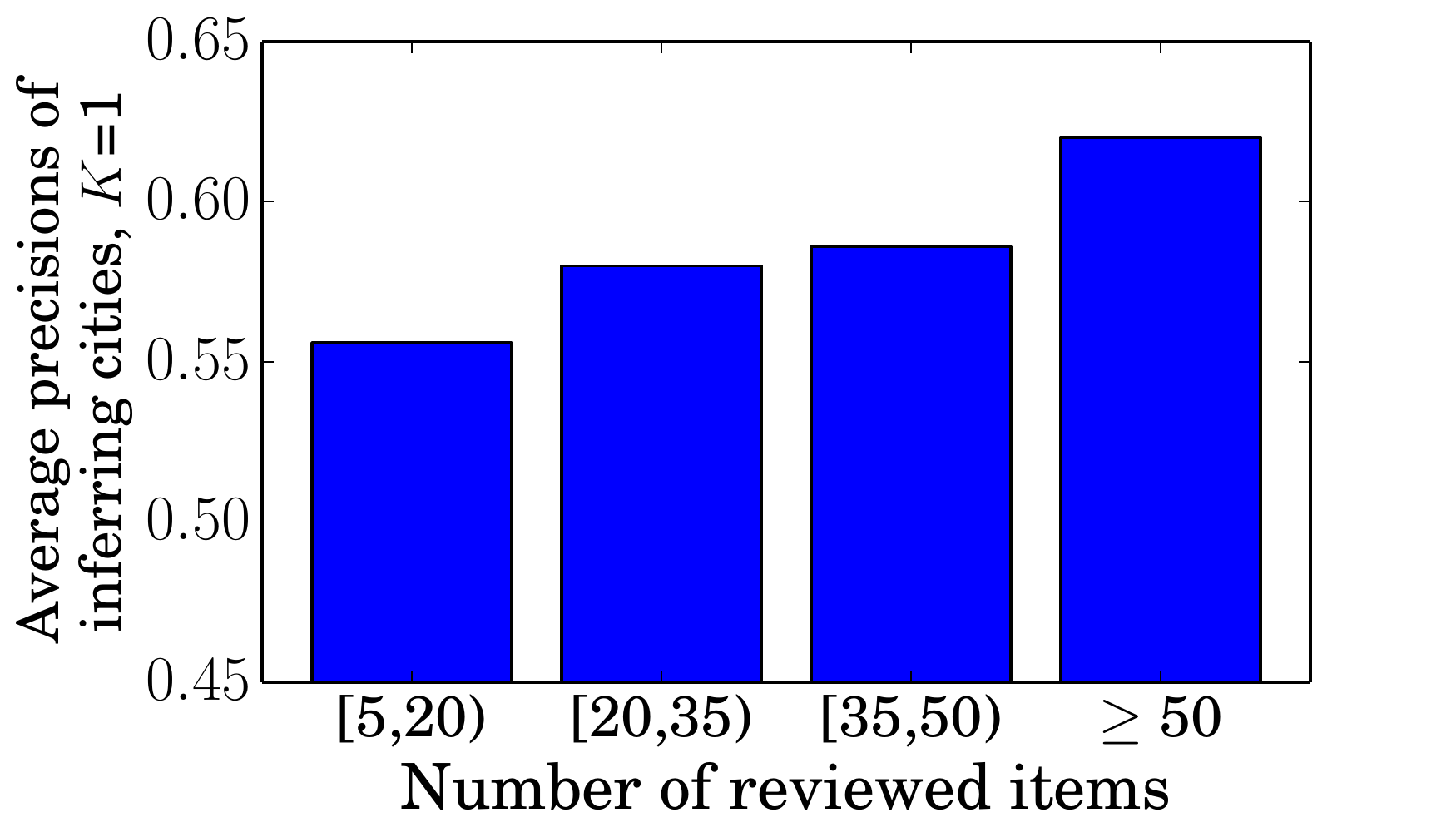} 
%\caption{Impact of the number of reviewed items  on the Precision of our attack VIAL for  inferring cities.} 
%%We observe that, when users share more behaviors, our attack is able to more accurately predict their attributes. }
%%\vspace{-3mm}
%\label{impact-of-behavior}
%\end{figure}

 \begin{figure}[!t]
%\vspace{4mm}
\centering
{\includegraphics[width=0.45 \textwidth]{}}
\caption{ Impact of the number of reviewed items  on the Precision of our attack VIAL for  inferring cities. We observe that, when users share more behaviors, our attack is able to more accurately predict their attributes.}
%We observe that backtracking substantially improves VIAL's performance. }
%Second, e find that inference performances are stable across  different backtracking strengths once they are larger than 0.}
%\vspace{-4mm}
 \label{impact-of-behavior}
\end{figure}

\myparatight{Impact of the number of reviewed items} Figure~\ref{impact-of-behavior} shows the Precision as a function of the number of reviewed items for inferring cities lived. We average  Precisions for test users whose number of reviewed items falls under a certain interval (i.e., [5,20), [20,35), [35,50), or $\geq$ 50).  We observe that our attack  can more accurately infer attributes for users who share more digital behaviors (i.e., reviewed items in our case). 

 \begin{figure}[!t]
%\vspace{-4mm}
\centering
\includegraphics[width=0.45 \textwidth]{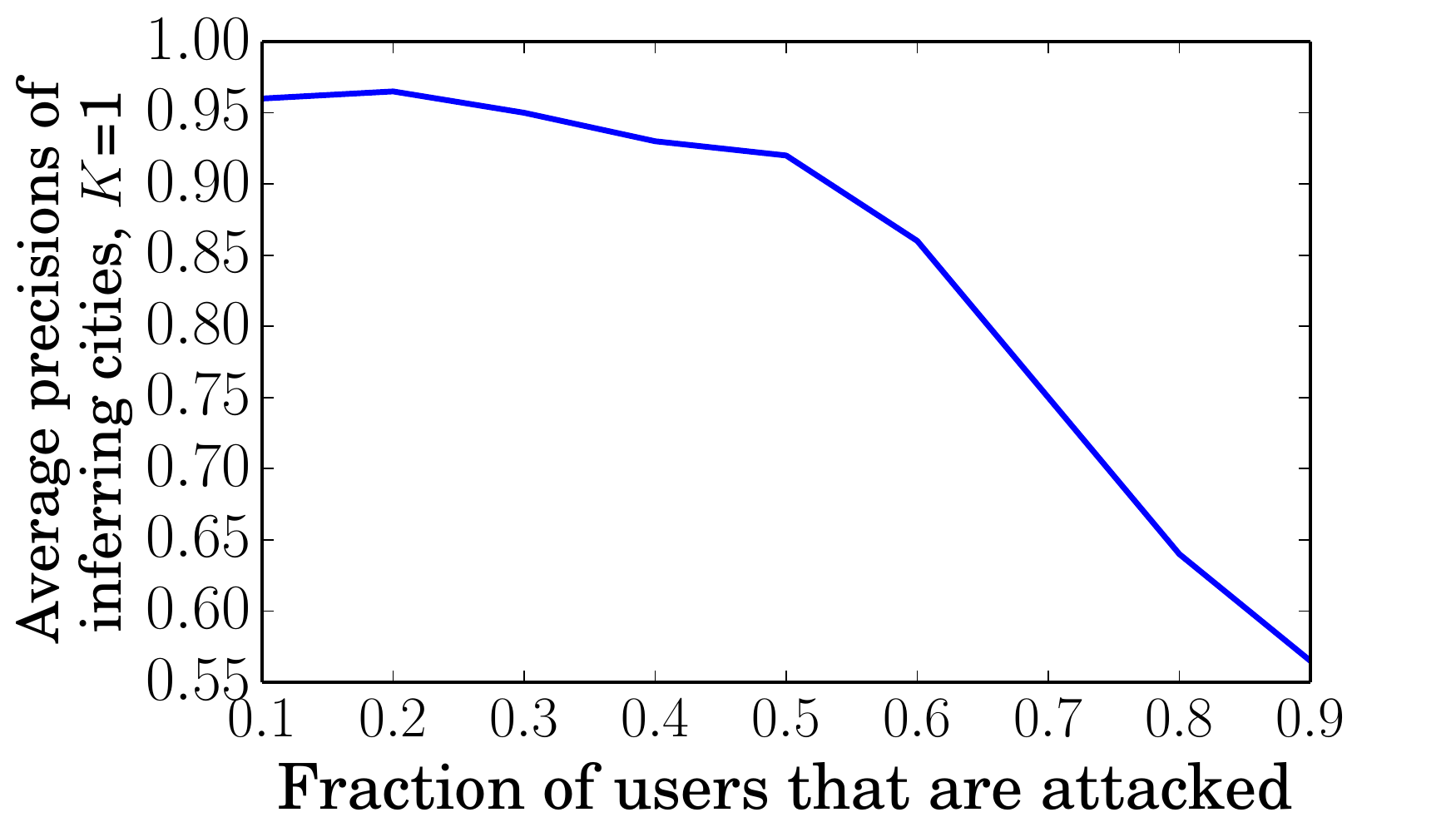} 
\caption{Confidence estimation: trade-off between the Precision of our attack  and the fraction of test users that are attacked. An attacker can substantially improve the attack success rates when attacking less users that are selected by our confidence estimator. }
%\vspace{-4mm}
\label{impact-of-confidence}
\end{figure}

%Recall that, for a test user, our attack produces a score for each candidate attribute value. 
%A confidence estimator takes these scores as an input and produces a number, which represents the confidence of our attack for this test user. 
%In particular, we adopt \emph{gap statistic}~\cite{gapstat} to estimate the confidence of our attack.  
%Suppose the attacker predicts the top-$K$ candidate attribute values (or categories) with the highest scores as the test user's attribute values (or categories). 
% Then the gap statistic  for a test user is the 
% difference between the $K$th largest score and the $(K+1)$th  largest score. 

%An attacker can use the gap statistic to balance between the Precision and the fraction of users he/she wants to attack. 
%More specifically, suppose the attacker only attacks users whose gap statistics are larger than a threshold; 
%if the attacker chooses a larger gap statistic threshold, 
%then the attacker will attack less users but could achieve higher Precisions for the attacked users. 
\myparatight{Confidence estimation} 
Figure~\ref{impact-of-confidence} shows the trade-off between the Precision and the fraction of users that are attacked via our confidence estimator. 
We observe that an attacker can increase the Precision ($K=1$) of inferring cities from 0.57 to  over 0.92 
if the attacker attacks a half of the test users that are selected via confidence estimation. 
%Moreover, we find that, if the attacker chooses to attack a half of test users,
%the average Precision ($K=1$) for inferring attributes increases from 0.44 to 0.82, while the average Precision ($K=1$) for inferring attribute categories increases
%from 0.68 to 0.91. 
We also tried the confidence estimator  called \emph{gap statistic}~\cite{writingstyle}, in which the confidence score for a targeted user is the difference between the score of the highest ranked attribute value and the score of the second highest ranked one. Our confidence estimator slightly outperforms gap statistic because a test user could have multiple attribute values and our attack could produce close scores for them. 

%and attribute categories increase by 0.38 and 0.24 on average, respectively.  
%In other words, if the attacker chooses to attack a half of test users, the average Precisions ($K=1$) for inferring attributes and attribute categories are
%0.82 and 0.91, respectively. 

%\input{related}
\section{Discussion}
\label{app:weight}
%\myparatight{Random walk vs. Belief Propagation (BP)} Phase I of VIAL essentially iteratively computes the vote capacity vector according to Equation~\ref{iter}. Therefore, Phase 1 of VIAL  can be viewed as performing a \emph{random walk with a restart}~\cite{Tong06} on the subgraph consisting of social nodes and social links, where the matrix ${\bf M}^T$ and $\alpha$ are the transition matrix and restart probability of the random walk, respectively. Belief Propagation (BP)~\cite{Pearl88} is another method that leverages structural similarity, and it could also be applied to infer attributes. Specifically, we can associate each node $u$ with a binary random variable $X_u$, and we model
%the joint probability distribution of all the binary random variables as a pairwise Markov Random Fields on the SBA network. To infer attributes for a targeted user $v$, we set $X_v=1$, and we can treat the probability $Pr(X_u=1|X_v=1)$ as the similarity between  $u$ and $v$, which can be approximated via BP. Then the attribute nodes with the highest similarity scores are predicted to belong to the targeted user. We implemented this BP-based method, and we found that it is orders of magnitude less scalable than VIAL. The key reason is that this BP-based method requires maintaining ``messages" on each edge. 

This work focuses on propagating vote capacity among the SBA network with given link weights, and our method VIAL is applicable to any link weights. However, it is an interesting future work to learn the link weights, which could further improve the attacker's success rates. 
In the following, we discuss one possible approach to learn link weights. 
Phase I of VIAL essentially iteratively computes the vote capacity vector according to Equation~\ref{iter}. Therefore, Phase 1 of VIAL  can be viewed as performing a \emph{random walk with a restart}~\cite{Tong06} on the subgraph consisting of social nodes and social links, where the matrix ${\bf M}^T$ and $\alpha$ are the transition matrix and restart probability of the random walk, respectively.
Therefore, the attacker could adapt \emph{supervised random walk}~\cite{supervisedRandomWalk} to learn the link weights. Specifically, the attacker already has a set of users with publicly available attributes and the attacker can use them as a training dataset to learn the link weights;  
the attacker removes these attributes from the SBA network as ground truth, and the link weights are learnt such that VIAL can predict attributes for these users the most accurately.

\section{Related Work}
%Attribute inference approaches can be roughly classified into two categories. They are \emph{social-network-based}, i.e., inferring attributes based on who users know, and \emph{behavior-based}, i.e., inferring attributes based on how users behave. The foundation of social-network-based approaches is \emph{homophily}, meaning that two linked users share similar attributes~\cite{homopily01}. The intuition behind behavior-based approaches is that users with different attributes have different interests, characteristics, and cultures so that their behaviors will be distinguishable, e.g., users from China and users from U.S. might be interested in different apps, movies, and musics. 

 %We review both friend-based and behavior-based attribute inference approaches.

\myparatight{Friend-based attribute inference} He et al.~\cite{He06} transformed attribute inference to Bayesian inference  on a Bayesian network that is constructed using the social links between users.
% In this Bayesian network, each user is modeled as a random variable which has a probability distribution over possible attribute values and links between users model the influences between the random variables. 
%Attribute inference is transformed to Bayesian inference on the Bayesian network. 
They evaluated their method using a  LiveJournal social network dataset with \emph{synthesized} user attributes. Moreover,  it is well known in the machine learning community that Bayesian inference is not scalable. Lindamood et al.~\cite{Lindamood09} modified Naive Bayes classifier to incorporate social links and other attributes of users  to infer some attribute. For instance, to infer a user's major, their method used the user's other attributes such as employer and cities lived, the user's social friends and their attributes. 
However,  their approach is not applicable to users that share no attributes at all. 

Zheleva and Getoor~\cite{Zheleva09} studied various approaches to consider both social links and groups that users joined to perform attribute inference. They found that, with only social links, the approach LINK achieves the best performance. 
%LINK transforms the adjacency matrix of the social network into binary feature vectors of users and learns classifiers using them.   Specifically,
LINK represents each user as a binary feature vector, and a feature has a value of 1 if the user is a friend of the person that corresponds to the feature. Then LINK  learns classifiers for attribute inference using these feature vectors. Gong et al.~\cite{gong2014joint} transformed attribute inference to a link prediction problem.
% by augmenting a social network with additional nodes that represent attribute values. 
Moreover, they showed that their approaches CN-SAN, AA-SAN, and RWwR-SAN outperform LINK. 

Mislove et al.~\cite{Mislove10} proposed to identify a local community in the social network by taking some seed users that share the same attribute value, and then they predicted all users in the local community to have the shared attribute value. Their approach is not able to infer attributes for users that are not in any local communities.  Moreover, this approach is data dependent since detected communities might not correlate with the attribute value. For instance, Trauda et al.~\cite{Trauda12}  found that communities in a MIT male network are correlated with residence but a female network does not have such property.  

Thomas et al.~\cite{AttributeInferenceFriendPET10} studied the inference of  attributes such as gender, political views, and religious views. They used multi-label classification methods and leveraged features from users' friends and wall posts. Moreover, they proposed the concept of multi-party privacy to defend against attribute inference. 

\myparatight{Behavior-based attribute inference} Weinsberg et al.~\cite{weinsberg2012blurme} investigated the inference of gender using the rating scores that users gave to different movies. In particular, they constructed a feature vector for each user; the $i$th entry of the feature vector is the rating score that the user gave to the $i$th movie if the user reviewed the $i$th movie, otherwise the  $i$th entry is 0. They compared a few  classifiers including Logistic Regression (LG)~\cite{logisticregression}, SVM~\cite{Cortes95}, and Naive Bayes~\cite{McCallum98}, and they found that  LG outperforms the other approaches. Bhagat et al.~\cite{BhagatAttributeInference14} studied attribute inference in an active learning framework. Specifically, they investigated which movies we should ask users to review in order to improve the inference accuracy the most. However,  this approach might not be applicable in real-world scenarios because users might not be interested in reviewing the selected movies.  

Chaabane et al.~\cite{Chaabane12} used the information about the musics users like to infer attributes. %including gender, relationship, and country. 
They augmented the musics  with the corresponding Wikipedia pages and then used topic modeling techniques to identify the latent similarities between musics. A user is predicted to share attributes with those that like similar musics with the user. 
Kosinski et al.~\cite{kosinski2013private} tried to infer various attributes based on the list of pages that users liked on Facebook. Similar to the work performed by Weinsberg et al.~\cite{weinsberg2012blurme}, they constructed a feature vector from the Facebook likes and used Logistic Regression to train classifiers to distinguish users with different attributes. 
%Their results showed that some attributes including gender, religious and political views, and sexual orientation can be inferred with high accuracies, while some attributes such as relationship status, parental separation, and whether use drugs are harder to predict with Facebook likes, though they  still achieved accuracies that are much better than random guessing. 
Luo et al.~\cite{luo2014you} constructed a model to infer household structures using  users' viewing behaviors in Internet Protocol Television (IPTV) systems, and  they showed promising results. 

\myparatight{Other approaches}  Bonneau et al.~\cite{pryingdataoutofOSN} studied the extraction of private user data in online social networks via various attacks such as account compromise, malicious applications, and fake accounts. 
%On one hand, these attacks are orthogonal to ours in terms of inferring user attributes that users provide in their profiles but set them to be private. On the other hand, 
These attacks can not infer user attributes that users do not provide in their profiles, while our attack can. Otterbacher~\cite{Otterbacher10} studied the inference of gender using users' writing styles. 
%Narayanan et al.~\cite{writingstyle} demonstrated a stronger result, i.e., author identity can be deanonymized via writing style analysis. 
Zamal et al.~\cite{attriInfer} used a user's tweets and her neighbors' tweets to infer attributes. They didn't consider social structures nor user behaviors.
Gupta et al.~\cite{Gupta13} tried to infer interests of a Facebook user via  sentiment-oriented mining on the Facebook pages that were liked by the user. Zhong et al.~\cite{AttributeInferenceFromLocationWSDM15} demonstrated the possibility of inferring user attributes using the list of locations where the user has checked in. 
 These studies are orthogonal to ours since they exploited information sources other than the social structures and behaviors that we focus on.  
 
 Attribute inference using social structure and behavior could also be solved by a social recommender system (e.g., \cite{YeSigIR12}). However, such approaches have higher computational complexity than our method for attacking a targeted user, and it is challenging for them to have theoretical guarantees as our attack. For instance, the approach proposed by Ye et al.~\cite{YeSigIR12} has a time complexity of $O(m\cdot k \cdot f \cdot d)$ on a single machine, where $m$ is the number of edges, $k$ is the latent topic size, $f$ is the average number of friends, and $d$ is the number of iterations. Note that both our VIAL and this approach can be parallelized on a cluster.

\section{Conclusion and Future Work}
In this work, we study the problem of attribute inference via combining social structures and user behaviors that are publicly available in online social networks. To this end, we first propose a \emph{social-behavior-attribute (SBA)} network model to gracefully integrate social structures, user behaviors, and their interactions with user attributes. Based on the SBA network model, we design a \emph{vote distribution attack (VIAL)} to perform attribute inference. We demonstrate the effectiveness of our attack both theoretically and empirically. In particular, via empirical evaluations on a real-world large scale dataset with 1.1 million users, we find that attribute inference is a serious practical privacy attack to online social network users and an attacker can successfully attack more users when considering both social structures and  user behaviors. The fundamental reason why our attack succeeds is that private user attributes are statistically correlated with publicly available information, and our attack captures such correlations to map publicly available information to private user attributes. 

%Specifically, SBA augments the social structure with additional nodes, each of which represents a behavior or an attribute value; and that a user has  a behavior or an attribute value is encoded by a link between the user and the corresponding behavior  node or attribute node. Second, we construct the \emph{vote distribution algorithm (VIAL)} to perform attribute inference under the SBA network model.  The principle of VIAL is that an attribute value or category receives high votes if it is structurally similar to the test user in the SBA network. 
%VIAL iteratively distributes a fixed vote capacity from a test user to the rest of users. A user receives a high vote capacity if the user is structurally similar to the test user in the SBA network. Then, each user votes for its attributes via splitting its vote capacity among them; and an attribute value (or category) sums all votes received from the users having it. VIAL predicts the test user to have   
%Third, we evaluate VIAL both theoretically and empirically. Moreover, 
%we extensively compare VIAL with previous approaches using a unique large-scale dataset that we collected from Google+ and Google Play. We demonstrate that VIAL substantially outperforms previous approaches.

 A few interesting directions for future work include learning the link weights of a SBA network, generalizing VIAL to infer hidden social relationships between users, as well as defending against our inference attacks.

 \section{Acknowledgements}
 We would like to thank the anonymous reviewers for their insightful feedback. 
 This work is supported by College of Engineering, 
 Department of Electrical and Computer Engineering of the Iowa State University.

%\balance
%\bibliographystyle{IEEEtranS}
{
%\bibliography{refs}
% Generated by IEEEtranS.bst, version: 1.12 (2007/01/11)

}

\appendix

\section{Proof of Theorem~\ref{converge}}
\label{app:converge}
According to  Equation~\ref{iter}, we have:
 \begin{align}
\vec{s}^{(i)}_v = (1-\alpha) ^ i ({\bf M}^T)^i \vec{s}^{(0)}_v  + \alpha (\sum_{k=0}^{i-1} (1-\alpha)^k ({\bf M}^T)^k) \vec{e}_v.
\end{align}

Therefore, 
 \begin{align}
 \lim_{i\to \infty}\vec{s}^{(i)}_v & = \lim_{i\to \infty}\alpha (\sum_{k=0}^{i-1} (1-\alpha)^k ({\bf M}^T)^k) \vec{e}_v \nonumber \\
 &= \alpha(I-(1-\alpha) {\bf M}^T)^{-1} \vec{e}_v.
 \end{align}
We note that the matrix $(I-(1-\alpha) {\bf M}^T)$  is nonsingular because it is strictly diagonally dominant.  

\section{Proof of Theorem~\ref{nobacktracking}}
\label{app:nobacktracking}
The matrix {\bf M} has non-negative entries, and each row of {\bf M} sums to be 1. Therefore, {\bf M} can be viewed as a transition matrix. In particular, {\bf M} can be viewed as a transition matrix of the following Markov chain on the SBA network: each social node is a state of the Markov chain; the transition probability from a social node $u$ to another social node $x$ is ${\bf M}_{ux}$, i.e., a social node $u$ can only transit to its social neighbors or hop-2 social neighbors with non-zero probabilities. 

When the SBA network is connected, the above Markov chain is \emph{irreducible} and
\emph{aperiodic}. Therefore, the Markov chain has a unique stationary distribution $\vec{\pi}$. Moreover, according to the Perron-Frobenius theorem~\cite{PerronFrobeniustheorem}, we have:
 \begin{align}
\lim_{i\to \infty} ({\bf M}^T)^i = [\vec{\pi}\  \vec{\pi} \cdots \vec{\pi}] \nonumber
\end{align}

When $\alpha=0$, we have  $\vec{s}^{(i)}_v = ({\bf M}^T)^i  \vec{s}^{(0)}_v$. Thus, we have 
\begin{align}
\vec{s}_v &= \lim_{i\to \infty} \vec{s}^{(i)}_v  \nonumber \\
&= \lim_{i\to \infty} ({\bf M}^T)^i  \vec{s}^{(0)}_v  \nonumber \\
& = [\vec{\pi}\  \vec{\pi} \cdots \vec{\pi}] \vec{s}^{(0)}_v \nonumber \\
& =  |V_s| \vec{\pi}, \nonumber
\end{align}
where $|V_s|$ is the sum of the entries of $\vec{s}^{(0)}_v$.

\section{Proof of Corollary~\ref{specialcase}}
\label{app:specialcase}
When $w_S = \tau \cdot {d_{u,S}}$, $w_{BS} = \tau \cdot {d_{u,B}}$, and $w_{AS} = \tau \cdot {d_{u,A}}$ for each user $u$,  the Markov chain defined by the transition matrix {\bf M} is a random walk on a weighted graph $G_w=(V_w, E_w)$, which is defined as follows: $V_w=V_s$,  an edge $(u,x)$ in $E_w$ means that $x$ is $u$'s social neighbor or  hop-2 social neighbor in the SBA network, and the weight of the edge $(u,x)\in E_w$ is   $\delta_{ux,S}\cdot {w_{ux}} +\delta_{ux,BS} \cdot {d_{u,B}} \cdot w_B(u,x)  + \delta_{ux,AS} \cdot {d_{u,A}} \cdot w_A(u,x)$. We can verify that, on the graph $G_w$, the weights of all edges that are incident to a node $u$ sum to $d_u$.  Therefore, the stationary distribution $\vec{\pi}$~\cite{randomwalk} of the random walk on $G_w$ is:
\begin{align}
\vec{\pi} = [\frac{d_{u_1}}{D}\ \frac{d_{u_2}}{D} \cdots \frac{d_{u_{|V_s|}}}{D}]^T.
\end{align}

Thus, according to Theorem~\ref{nobacktracking}, we have $\vec{s}_{vu} = |V_s| \frac{d_u}{D}$.
\end{document}